\documentclass[twocolumn]{autart}

\usepackage{amsmath}
\usepackage{algorithm,algpseudocode}
\usepackage{amssymb}
\usepackage{color}
\usepackage{enumerate}
\usepackage{wrapfig}
\usepackage{graphicx}          
\usepackage[dvips]{epsfig}    
\usepackage[round, sort]{natbib}
\usepackage{hyperref}
\usepackage{comment}
\usepackage{mathrsfs}
\usepackage{xpatch}

\makeatletter
\xpatchcmd{\no@harm}
  {\def\protect{\noexpand\protect\noexpand}}
  {}
  {\typeout{autart protect patch applied}}
  {\typeout{autart protect patch failed}}
\makeatother

\usepackage{url}

\newtheorem{lemma}{Lemma}[section]
\newtheorem{theorem}{Theorem}[section]

\newtheorem{remark}{Remark}[section]
\newtheorem{definition}{Definition}[section]

\newtheorem{assumption}{Assumption}[section]

\newtheorem{problem}{Problem}

\begin{document}
\begin{frontmatter}
			
	\title{Data-Driven Spiking Control for Distributed $\epsilon$-Nash Equilibrium Seeking}

	\thanks[footnoteinfo]{The work was supported in part by the National Natural Science Foundation of China under Grants U23B2059 and 62688201. (\emph{Corresponding author: Gang Wang.})}

		\author[Bit]{Linqi Wang}\ead{wanglinqi@bit.edu.cn},
        \author[Bit]{Wei Xiao}\ead{xiaowei@bit.edu.cn},
        \author[Bit]{Yuzhou Wei}\ead{weiyuzhou@bit.edu.cn}, 
        \author[Bit]{Bin Xin}\ead{brucebin@bit.edu.cn},  
        \author[Thu]{Keyou You}\ead{youky@tsinghua.edu.cn},  
        		\author[Bit]{Gang Wang}\ead{gangwang@bit.edu.cn}
		
		\address[Bit]{National Key Lab of Autonomous Intelligent Unmanned Systems, Beijing Institute of Technology, Beijing 100081, China}  
       \address[Thu]{Department of Automation and BNRist, Tsinghua University, Beijing 100084, China}

		
		\maketitle


\begin{abstract}
This paper studies how a feedback law synthesized directly from data can be realized by spiking control while retaining a game-theoretic performance guarantee. We consider distributed $\epsilon$-Nash equilibrium (NE) seeking in network games played by linear dynamical agents with unknown models and exogenous disturbances. The pseudo-gradient of the game is treated as a regulated error, and local internal models account for signals generated by known exosystems. Robust linear matrix inequalities are then used to compute stabilizing analogue feedback gains directly from noisy local input-state data, without identifying the agent dynamics. To implement these gains using only fixed-weight spikes, we develop two spiking realizations. The first realization uses non-interacting leaky integrate-and-fire units, while the second permits reset coupling among the neuronal units. In both cases, a continuous auxiliary coordinate exposes the impulsive closed loop as the stable analogue system driven by a bounded implementation error. This representation yields forward completeness, Zeno-freeness, and an ultimate bound on the pseudo-gradient, subject to explicit event-processing conditions for the connected architecture. The bound implies that, after a finite transient, the agents' outputs constitute an $\epsilon$-NE for every $\epsilon$ above a finite threshold. A spacecraft formation reconfiguration example illustrates the data-driven synthesis, the two spiking realizations, and their practical equilibrium behavior.
\end{abstract}
	\begin{keyword} Data-driven control, spiking control, distributed $\epsilon$-Nash equilibrium seeking, neuronal dynamics, output regulation, network game
	\end{keyword}
		\end{frontmatter}


    \allowdisplaybreaks

\section{Introduction}
Network games provide a natural model for decision making in
interconnected engineering systems. Examples include power and
transportation networks, distributed sensing platforms, and robotic
teams. More recently, this perspective has been extended to embodied
intelligent games involving autonomous interactions among
heterogeneous agents \citep{yi2025embodied}. In a network game, each
agent optimizes a local objective that depends on both its own
decision and the decisions communicated by its neighbors
\cite{bacsar1998dynamic,stankovic2011distributed}. The desired operating point is typically a Nash equilibrium (NE), at which no agent can improve its objective by unilaterally deviating
\cite{ABOUHEAF20143038,liu2025online,Decentralized2024Meng}.
Computing such a profile in a distributed manner requires one to respect both the noncooperative objectives of the agents and the information constraints imposed by the communication graph. Gradient-play, consensus-based, learning-based, and passivity-based methods provide several ways to address this problem \cite{SALEHISADAGHIANI201817,de2019distributed,zhu2020distributed,gadjov2018passivity,liu2024distributed,liu2025distributed,li2024master,nortmann2024nash,lian2025distributed}.

When each decision is the output of a dynamical system, equilibrium
seeking becomes a control problem. The agents must remain stable while
their outputs are steered toward a game-theoretic solution
\cite{9863764,tang2022nash,WangRobus2022}.
Related feedback Nash
designs have been developed for formation control and collision
avoidance of multiple unmanned aerial vehicles in unknown
environments \citep{xue2025formation}, illustrating the broader role
of game-theoretic feedback in coordinating dynamical multi-agent
systems.
For the quadratic network games studied
here, the pseudo-gradient is affine, and its unique zero coincides
with the NE. This observation allows NE seeking to be formulated as an output regulation problem, with the pseudo-gradient selected as the regulated error and the internal model principle employed to reject constant or harmonic disturbances generated by known exosystems
\cite{huang2004nonlinear,guo2021linear,bin2020approximate,liu2025data}.
An exact realization would drive this error to zero. With persistent
uncertainty or implementation mismatch, however, practical
regulation is the appropriate objective. A bounded pseudo-gradient further yields an explicit $\epsilon$-NE guarantee, ensuring that no agent can reduce its cost by more than $\epsilon$ through a unilateral deviation
\cite{MylvaganamNash2015,caines2016epsilon}.

The first obstacle is that regulation-based designs normally require a model of every agent. Such models may be costly to derive and may not represent the operating system accurately. Direct data-driven
control instead uses measured trajectories as the description from which a controller is synthesized and certified \citep{willems2005note}. In networked settings, the resulting literature spans communication delays, aperiodic transmission, security, and distributed configurations \citep{wang2026data}. Specific developments include stabilization, predictive control, adaptive control, and output regulation of unknown linear systems \citep{berberich2020data,liu2022data,zhao2025data,li2026data}. It has also begun to influence game-theoretic control. Trajectory-based and iterative schemes have been proposed for linear quadratic dynamic games \citep{nortmann2024nash}, and distributed NE seeking has recently been formulated as a data-driven output regulation problem \citep{wang2026robust}. These methods, however, still implement the resulting feedback as a continuously valued input and therefore require online modulation of the control amplitude.

The second obstacle is therefore to realize the data-designed feedback without applying its continuously modulated amplitude directly to the plant. Motivated by spiking control \cite{sepulchre2022spiking,PETRI2026101759}, this paper considers an implementation in which the feedback command drives a collection of leaky integrate-and-fire (LIF) units and is encoded by their firing times and channel selections. Each channel is associated with a prescribed impulse vector, so that a threshold crossing generates a fixed-weight spike along the corresponding direction. Here, spiking control describes the neuronal implementation, in which LIF dynamics serve as causal event generators driven by the feedback signal \cite{gerstner2002spiking}, whereas impulsive control provides the mathematical framework for the resulting state jumps at isolated event times \cite{meng2012optimal,yang2002impulsive,li2025nonlinear}. Existing spiking and neuronal control results, however, mainly start from a known plant model or a predesigned analogue controller \cite{petri2025spiking,eilers2025stability,slijkhuis2023closed}. They therefore do not establish whether the stability and game-theoretic guarantees of the data-designed analogue closed loop persist under spiking implementation. In particular, spike-induced plant jumps are coupled with neuronal resets, while reset interactions may trigger firing cascades. A framework connecting robust synthesis from noisy data, spiking implementation, impulsive closed-loop analysis, and \(\epsilon\)-NE performance is therefore still lacking.

This paper develops such a framework. We first formulate distributed
$\epsilon$-NE seeking as practical regulation of the pseudo-gradient.
Each agent augments its unknown dynamics with a local internal model
and collects an offline trajectory in isolation. A robust data-based
linear matrix inequality then yields a stabilizing analogue feedback
gain without identifying the agent dynamics.  Online, this gain
determines the input functions of the neuronal units, whose spikes form the fixed-weight control input applied to the agent. The key analytical device is an auxiliary
coordinate that combines the agent and neuronal states so that spike
jumps cancel exactly. In this coordinate, the impulsive system becomes
the stable analogue closed loop perturbed by a bounded neuronal
state. This interpretation makes the relation between data-driven synthesis, spiking implementation, and practical game performance.

The main contributions are as follows.
\begin{enumerate}
\item We develop a direct data-driven spiking control framework for distributed \(\epsilon\)-NE seeking. Stabilizing analogue gains are synthesized robustly from noisy local trajectories without model identification, while the resulting commands are encoded by the firing times and channel selections of LIF-generated spikes rather than by continuously varying the control amplitude.
\item For non-interacting LIF units, we derive an exact rate-equivalent encoding of the data-designed feedback and construct a continuous auxiliary coordinate that cancels spike-induced jumps. We establish forward completeness and positive finite-horizon inter-event-time bounds, and derive an ultimate pseudo-gradient bound yielding an eventual $\epsilon$-NE guarantee.
\item We further develop a reset-coupled spiking realization that implements the same data-designed gain using a smaller set of generally nonorthogonal impulse directions. Under explicit finite-cascade and nonaccumulation conditions, we establish boundedness of the coupled neuronal states and derive the corresponding closed-loop and $\epsilon$-NE guarantees.
\end{enumerate}

The remainder of this paper is organized as follows.
Section~\ref{Problem Formulation} formulates the game, develops its
regulation representation, describes the offline data, and introduces the spiking realization. Sections~\ref{Data-Driven Impulsive Control}
and~\ref{Extension to Connected Neuronal Units} analyze the
non-interacting and connected neuronal architectures, respectively.
Section~\ref{sec:simulation} presents the spacecraft example, and
Section~\ref{conclusion} concludes the paper.

\emph{Notation.} Let $\mathbb{R}^+$ denote the set of nonnegative numbers.
The symbol $\|\cdot\|$ denotes the Euclidean norm for vectors and the induced $2$-norm for matrices.
For a matrix $A$, $A^\top$, $\operatorname{spec}(A)$, and
$\operatorname{im}(A)$ denote its transpose, spectrum, and
column space, respectively, and $I$ denotes an identity
matrix of compatible dimension.
The operators $\operatorname{col}(\cdot)$, $\operatorname{diag}(\cdot)$ and $\operatorname{blkdiag}(\cdot)$ denote column stacking, a diagonal
matrix and a block-diagonal matrix, respectively.
For symmetric matrix $A$, the relations $A\succ 0$ ($A\succeq 0$) and $A\prec 0$ ($A\preceq 0$) denote positive (semi)definiteness and negative (semi)definiteness, respectively, while $\lambda_{\min}(A)$ and $\lambda_{\max}(A)$ denote its smallest and largest eigenvalues.
For $A\succ 0$, let $\kappa(A)=\lambda_{\max}(A)/\lambda_{\min}(A)$.
A square matrix is Hurwitz if all its eigenvalues have negative real parts.
For $x\in\mathbb{R}$, define $[x]_+:=\max\{x,0\}$, with componentwise extension to vectors.
At event time $t$, $t^-$ and $t^+$ denote the left and right limits.
All impulsive trajectories are right-continuous, and a solution is called Zeno if it has infinitely many events in finite time.
The symbol $\mathbf{1}$ denotes an all-one vector of compatible dimension. For $M\in\mathbb R^{m\times L}$,
$\operatorname{cone}(M):=\{M\rho:\rho\in\mathbb R_+^L\}$
denotes the conic hull of the columns of $M$. In symmetric block matrices, $*$ denotes the symmetric
counterpart of the corresponding off-diagonal block.

\section{Problem Formulation and Control Architecture}\label{Problem Formulation}
We develop the proposed architecture in four steps. We first define
the network game and relate its pseudo-gradient to the
$\epsilon$-NE objective. We then lift this objective to an output
regulation problem, construct a data-consistent model set from local
experiments, and finally explain how the data-driven feedback will be implemented through neuronal spike trains.
\subsection{Network Game and Practical Equilibrium}
\label{subsec:network_game}
Consider \(N\) agents indexed by
\(\mathcal N:=\{1,\ldots,N\}\). Agent \(i\in\mathcal N\) is described by
\begin{subequations}    \label{eq:agent_dynamics}
\begin{align}
    \dot x_i&=A_i x_i+B_i u_i+W_i\omega_i,\\
    y_i&=C_i x_i,
\end{align}
\end{subequations}
where \(x_i\in\mathbb{R}^{n_i}\), \(u_i\in\mathbb{R}^{m_i}\), and
\(y_i\in\mathbb{R}^{p_i}\) denote the state, control input, and
decision output, respectively. The constant matrices \(A_i\), \(B_i\),
\(C_i\), and \(W_i\) are unknown. The disturbance
\(\omega_i\in\mathbb{R}^{q_i}\) is generated by the known exosystem
\begin{equation}
    \dot \omega_i=E_i\omega_i.
    \label{eq:exosystem}
\end{equation}
\begin{assumption}
\label{ass:exosystem}
For each \(i\in\mathcal N\), \(E_i\) has no eigenvalues with
negative real parts.
\end{assumption}

Information is exchanged over a directed graph
$\mathcal{G}_c=(\mathcal{N},\mathcal{E})$. An edge
$(j,i)\in\mathcal{E}$ means that agent $i$ receives the decision
output of agent $j$, and
$\mathcal{N}_i:=\{j\in\mathcal{N}:(j,i)\in\mathcal{E}\}$ is the
corresponding in-neighbor set.
\begin{assumption}\label{ass:graph}
The graph $\mathcal{G}_c$ is weakly connected and acyclic.
\end{assumption}
Weak connectivity is commonly assumed for network games over directed
graphs \cite{guo2021linear}.  Weak connectivity ensures that
the underlying undirected graph is connected, whereas acyclicity
guarantees the existence of a topological ordering of the agents.
The role of this ordering in the equilibrium analysis will be made
explicit below.

Given the decisions of its in-neighbors, agent \(i\) determines its decision \(y_i\) by minimizing the cost function
\begin{equation}\label{eq:cost}
 J_i(y_i,y_{\mathcal{N}_i})
 =y_i^{\top}R_{ii}y_i+Q_{ii}y_i
 +\sum_{j\in\mathcal{N}_i}y_i^{\top}R_{ij}y_j,
\end{equation}
where
$y_{\mathcal{N}_i}:=\operatorname{col}(y_j)_{j\in\mathcal{N}_i}$,
$R_{ii}=R_{ii}^{\top}\succ0$,
$Q_{ii}\in\mathbb{R}^{1\times p_i}$, and
$R_{ij}\in\mathbb{R}^{p_i\times p_j}$. We set \(R_{ij}=0\) whenever
\(j\notin\mathcal N_i\), \(j\ne i\). The cost matrices are known to
agent \(i\). We denote the resulting network game by
$\mathcal{G}:=\bigl(\mathcal{N},\mathbb{R}^{p_i},J_i,\mathcal{G}_c\bigr)$.

\begin{definition}\cite[Definition~4.1]{bacsar1998dynamic}\label{def:NE}
A profile $y^\star={\rm col}(y_1^\star,\ldots,y_N^\star)$ is an NE of $\mathcal{G}$ if, for every $i\in\mathcal{N}$,
\begin{align*}
J_i(y_i^\star,y_{\mathcal{N}_i}^\star)
 \leq J_i(y_i,y_{\mathcal{N}_i}^\star),\qquad
 \forall y_i\in\mathbb{R}^{p_{i}}.
 \end{align*}
\end{definition}
Let $y:={\rm col}(y_1,\ldots,y_N)$. The pseudo-gradient is
\begin{align}\label{eq:pseudogradient}
 F(y)&:={\rm col}\bigl(\nabla_1J_1(y_1,y_{\mathcal{N}_1}),\ldots,\nabla_NJ_N(y_N,y_{\mathcal{N}_N})\bigr)\nonumber\\
 &=\widetilde R y+\widetilde Q,
\end{align}
where $ \widetilde Q:=\operatorname{col}(Q_{11}^{\top},\ldots,Q_{NN}^{\top})$,  and \begin{equation*}
 [\widetilde R]_{ij}=
 \begin{cases}
 R_{ii}+R_{ii}^{\top},&j=i,\\
 R_{ij},&j\in\mathcal{N}_i,\\
 0,&\text{otherwise}.
 \end{cases}
\end{equation*}
Under a topological ordering of the acyclic graph, \(\widetilde R\) is
block lower triangular, with diagonal blocks
$R_{ii}+R_{ii}^{\top}=2R_{ii}\succ0$. It is therefore nonsingular.
Because \(J_i(\cdot,y_{\mathcal N_i})\) is strictly convex, the
first-order condition \(F(y)=0\) is also sufficient for an NE.
Consequently, the game admits the unique equilibrium
\begin{equation}
\label{eq:unique_NE}
y^\star=-\widetilde R^{-1}\widetilde Q.
\end{equation}
The spiking realization developed below need not reproduce the analogue feedback exactly. We therefore use the following approximate
equilibrium notion to express the residual closed-loop error.

\begin{definition}\cite[p. 13]{etessami2010complexity}
\label{def:epsilon_NE}
For \( \epsilon\ge0\), a profile
\(y^ \epsilon=\operatorname{col}(y_1^ \epsilon,\ldots,y_N^ \epsilon)\)
is an \( \epsilon\)-NE of \(\mathcal G\) if, for every $i\in\mathcal{N}$,
\begin{equation*}
J_i(y_i^ \epsilon,y_{\mathcal N_i}^ \epsilon)
\le J_i(y_i,y_{\mathcal N_i}^ \epsilon)+ \epsilon,
\qquad \forall y_i\in\mathbb R^{p_i}.
\end{equation*}
\end{definition}
The next subsection converts this cost-based definition into a
regulation objective that is directly amenable to controller design.

\subsection{Regulation Lifting and Analogue Reference Controller}
\label{subsec:practical_output_regulation}
Since the action sets are unconstrained and
$J_i(\cdot,y_{\mathcal N_i})$ is strongly convex for every
$i\in\mathcal N$, a NE $y^\star$ is characterized by
$F(y^\star)=0$. This characterization allows the NE
seeking problem to be formulated as a regulation problem. Accordingly,
we select the pseudo-gradient residual as the regulated output
\begin{equation}
    e:=F(y)=\widetilde R y+\widetilde Q, ~~e_i=\nabla_iJ_i(y_i,y_{\mathcal N_i}).
    \label{eq:regulated_error}
\end{equation}

We next quantify how the regulation accuracy determines the NE
accuracy. For each \(i\in\mathcal N\), define
\[
S_i:=R_{ii}+R_{ii}^{\top},
\qquad
\alpha_i:=\lambda_{\min}(S_i)>0.
\]
For fixed \(y_{\mathcal N_i}\), completing the square yields
\begin{equation}\label{eq:individual_regret_bound}
J_i(y_i,y_{\mathcal N_i})
-\inf_{\bar y_i\in\mathbb R^{p_i}}
J_i(\bar y_i,y_{\mathcal N_i})
=
\frac{1}{2}e_i^{\top}S_i^{-1}e_i
\le
\frac{\|e_i\|^2}{2\alpha_i}.
\end{equation}

Let
$\underline{\alpha}
:=
\min_{i\in\mathcal N}\alpha_i$.
It follows from \eqref{eq:individual_regret_bound} that
\begin{equation}\label{eq:aggregate_regret_bound}
\max_{i\in\mathcal N}
\left\{
J_i(y_i,y_{\mathcal N_i})
-\inf_{\bar y_i\in\mathbb R^{p_i}}
J_i(\bar y_i,y_{\mathcal N_i})
\right\}
\le
\frac{\|F(y)\|^2}{2\underline{\alpha}}.
\end{equation}
Therefore, the practical regulation condition
\begin{equation}\label{eq:practical_regulation}
\limsup_{t\to\infty}\|e(t)\|\le\delta
\end{equation}
implies
\[
\limsup_{t\to\infty}
\max_{i\in\mathcal N}
\{
J_i(y_i(t),y_{\mathcal N_i}(t))
\!-\!\inf_{\bar y_i\in\mathbb R^{p_i}}
J_i(\bar y_i,y_{\mathcal N_i}(t))
\}
\!\le\!
\frac{\delta^2}{2\underline{\alpha}}.
\]
Thus, a pseudo-gradient residual bounded by \(\delta\) yields the
corresponding asymptotic NE-accuracy bound.

The preceding result establishes a direct link between regulation
performance and NE accuracy. To obtain a standard local
output-regulation model, the exogenous signal $\omega_i$ and the
constant term $Q_{ii}^{\top}$ are embedded into the extended
exogenous state
\begin{equation}\label{eq:extended-exosystem}
 \nu_i:={\rm col}(\omega_i,1),\qquad
 \dot\nu_i=\bar E_i\nu_i,
\end{equation}
where
$ \bar E_i:=
\begin{bmatrix}
\begin{smallmatrix}
E_i&0_{q_i\times1}\\0_{1\times q_i}&0
\end{smallmatrix}\end{bmatrix}$, 
$\nu_i(0)={\rm col}(\omega_i(0),1)$.
Define $\bar W_i:=[\,W_i\ \ 0_{n_i\times1}\,]$ and $\bar Q_{ii}:=[\,0_{p_i\times q_i}\ \ Q_{ii}^{\top}\,]$. The local dynamics and regulated error can then be written as
\begin{subequations}\label{eq:local-regulation-model}
\begin{align}
 \dot x_i&=A_ix_i+B_iu_i+\bar W_i\nu_i,\\
 \dot\nu_i&=\bar E_i\nu_i,\\
 e_i&=(R_{ii}+R_{ii}^{\top})C_ix_i+
 \sum_{j\in\mathcal N_i}R_{ij}C_jx_j+\bar Q_{ii}\nu_i,\label{eq:local-regulation-modelc}
\end{align}
\end{subequations}
We use the following standard stabilization and output regulation
conditions.
\begin{assumption}
\label{ass:stabilizability_detectability}
For each \(i\in\mathcal N\), the pair \((A_i,B_i)\) is stabilizable.
\end{assumption}
\begin{assumption} \label{ass:regulation_solvability} For each \(i\in\mathcal N\) and every \(\lambda\in\operatorname{spec}(E_i)\cup\{0\}\),
$\operatorname{rank}\begin{bmatrix}
\begin{smallmatrix} A_i-\lambda I_{n_i} & B_i\\ C_i & 0_{p_i\times m_i}
 \end{smallmatrix}\end{bmatrix}=n_i+p_i$.
\end{assumption}
\begin{remark}
Assumption \ref{ass:regulation_solvability} is the standard transmission-zero condition for linear output regulation
\cite[Theorem~1.9, Remark~1.11]{huang2004nonlinear}. It excludes invariant zeros of $(A_i,B_i,C_i)$ at all eigenvalues of the extended exosystem, including the zero eigenvalue introduced by the constant offset, and thereby guarantees solvability of the local regulator equations. Since $R_{ii}+R_{ii}^{\top}\succ0$, replacing $C_i$ with $(R_{ii}+R_{ii}^{\top})C_i$ preserves the rank condition; hence the same solvability condition applies to the local regulated error in \eqref{eq:local-regulation-model}.
\end{remark}

To reject signals generated by the extended exosystem, we invoke the
internal model principle \cite{huang2004nonlinear}. Agent \(i\) is
equipped with a controllable \(p_i\)-copy internal model
\((\Pi_i,\widehat{\Pi}_i)\) of \(\bar E_i\)
\begin{equation}
\dot\zeta_i=\Pi_i\zeta_i+\widehat{\Pi}_i e_i,~~~ \zeta_i\in\mathbb{R}^{h_i}. \label{eq:internal_model} \end{equation}
Define the local augmented state $\eta_i:=\operatorname{col}(x_i,\zeta_i)$.
For compactness, introduce the stacked variables
\begin{align*}
x&:=\operatorname{col}(x_1,\ldots,x_N),&
\zeta&:=\operatorname{col}(\zeta_1,\ldots,\zeta_N),\\
\eta&:=\operatorname{col}(\eta_1,\ldots,\eta_N),&
\nu&:=\operatorname{col}(\nu_1,\ldots,\nu_N),\\
u&:=\operatorname{col}(u_1,\ldots,u_N).
\end{align*}
Combining \eqref{eq:local-regulation-model} and \eqref{eq:internal_model} gives the augmented  dynamics
\begin{subequations} \label{eq:augmented_open_loop} 
\begin{align} \dot\eta &= A_\eta\eta+B_\eta u+W_\eta\nu,\label{eq:augmented_open_loop1}\\
\dot\nu &= \bar E\nu,\label{eq:augmented_open_loop2}\\ 
e &= C_\eta\eta+Q_\eta\nu\label{eq:augmented_open_loop3},
\end{align}
\end{subequations}
where $\bar E:=\operatorname{blkdiag}(\bar E_1,\ldots,\bar E_N)$,
\(A_\eta=[A_{\eta,ij}]_{i,j=1}^N\) is block lower triangular, \(C_\eta=[C_{\eta,ij}]_{i,j=1}^N\), and
\begin{align*}
A_{\eta,ii}
&:=
\begin{bmatrix}
A_i & 0\\
\hat{\Pi}_{i}(R_{ii}+R_{ii}^{\top})C_i & \Pi_{i}
\end{bmatrix},~~B_{\eta i}:=
\begin{bmatrix}
B_i\\
0
\end{bmatrix},\\
A_{\eta,ij}
&:=
\begin{cases}
\begin{bmatrix}
0 & 0\\
\hat{\Pi}_{i}R_{ij}C_j & 0
\end{bmatrix},
& j\in\mathcal N_i,\\[1mm]
0, & \text{otherwise},
\end{cases}~~
W_{\eta i}:=
\begin{bmatrix}
\bar W_i\\
\hat{\Pi}_{i}\bar Q_{ii}
\end{bmatrix},\\
C_{\eta,ij}&:=
\begin{cases}
\begin{bmatrix}
(R_{ii}+R_{ii}^{\top})C_i & 0
\end{bmatrix},
& i=j,\\[1mm]
\begin{bmatrix}
R_{ij}C_j & 0
\end{bmatrix},
& j\in\mathcal N_i,\\[1mm]
0, & \text{otherwise},
\end{cases}\\
B_\eta&:=\operatorname{blkdiag}(B_{\eta1},\ldots,B_{\eta N}),\\
W_\eta&:=\operatorname{blkdiag}(W_{\eta1},\ldots,W_{\eta N}), \\
Q_\eta&:=\operatorname{blkdiag}
(\bar Q_{11},\ldots,\bar Q_{NN}).
\end{align*}
If the model were available and continuous actuation were permitted,
a natural reference design would be the distributed state feedback
\begin{equation}\label{eq:analogue-feedback}
 u_i=K_i\eta_i=K_{xi}x_i+K_{\zeta i}\zeta_i,
~ K_i:=[\,K_{xi}\ \ K_{\zeta i}\,].
\end{equation}
We call \eqref{eq:analogue-feedback} the \emph{analogue reference
controller}. The gain \(K_i\) will be computed from data and used to
drive the neuronal encoder; the signal \(K_i\eta_i\) itself is not
applied continuously to the system.

\subsection{Local Data and the Consistent Model Set}\label{Local Data}
Since the local augmented matrices $A_{\eta,ii}$ and $B_{\eta i}$
are unknown, direct model-based synthesis is not available. Moreover,
using the full regulated error during data acquisition would make the
augmented trajectory depend on neighboring outputs. A finite local
experiment is therefore performed prior to controller deployment.

During the experiment, the full regulated error in
\eqref{eq:local-regulation-modelc} is replaced by its local component
\begin{equation}\label{eq:local_data_error}
e_i^{\mathrm{loc}}
:=
(R_{ii}+R_{ii}^{\top})y_i+Q_{ii}^{\top}.
\end{equation}
This amounts to setting
$\sum_{j\in\mathcal N_i}R_{ij}y_j$ to zero at the input of the
internal model. Since the plant dynamics contain no inter-agent
coupling, only the incoming information channel is modified, and no
physical disconnection is required.

Let $\bar\zeta_i$ denote the internal-model state used during data
acquisition, governed by
$\dot{\bar\zeta}_i:=\Pi_i\bar\zeta_i+\widehat{\Pi}_i e_i^{\mathrm{loc}}$,
$\bar\eta_i:=\operatorname{col}(x_i,\bar\zeta_i)$.
Substitution of \eqref{eq:local_data_error} into the internal-model
dynamics yields
\begin{equation}\label{eq:offline_augmented_dynamics}
\dot{\bar\eta}_i
=
A_{\eta,ii}\bar\eta_i
+
B_{\eta i}u_i
+
W_{\eta i}\nu_i.
\end{equation}
At sampling instants \(t_0,\ldots,t_T\), we record
$\{u_i(t_\ell)\}_{\ell=0}^{T-1}$ and
$\{x_i(t_\ell),y_i(t_\ell),\bar{\zeta}_i(t_\ell)\}_{\ell=0}^{T}$.
The state derivatives are approximated by
$\dot x_i(t_\ell):=\frac{x_i(t_{\ell+1})-x_i(t_\ell)}
{t_{\ell+1}-t_\ell}$ and
$\dot{\bar \zeta}_{i}(t_\ell):=\frac{\bar{\zeta}_i(t_{\ell+1})-\bar{\zeta}_i(t_\ell)}
{t_{\ell+1}-t_\ell}$. We then set
$\dot{\bar{\eta}}_i(t_\ell)
:=
\operatorname{col}
\bigl(
 \dot{x}_i(t_\ell),
 \dot{\bar{\zeta}}_i(t_\ell)
\bigr)$.
Define the data matrices
\begin{subequations}
\label{eq:data_matrices}
\begin{align}
U_i^-&:=[u_i(t_0)\;u_i(t_1)\;\cdots\;u_i(t_{T-1})],\\
X_i^-&:=[\bar{\eta}_i(t_0)\;\bar{\eta}_i(t_1)\;\cdots\;\bar{\eta}_i(t_{T-1})],\\
X_i^+&:=[\dot{\bar{\eta}}_i(t_0)\;\dot{\bar{\eta}}_i(t_1)\;\cdots\;
\dot{\bar{\eta}}_i(t_{T-1})].
\end{align}
\end{subequations}
The collected data satisfy
\begin{equation}\label{eq:data_equation}
    X_i^+=A_{\eta,ii}X_i^-+B_{\eta i}U_i^-+W_{\eta i}V_i+D_i,
\end{equation}
where $V_i:=[\nu_i(t_0)\;\nu_i(t_1)\;\cdots\;\nu_i(t_{T-1})]$.
The matrix
$D_i:=[d_i(t_0)\;d_i(t_1)\;\cdots\;d_i(t_{T-1})]$ collects derivative
approximation and measurement errors. The unmeasured exogenous
contribution and these errors are treated jointly through the
following assumptions.
\begin{assumption}\label{ass:rank_data}
For each \(i\in\mathcal N\), the matrix
$Z_i:=\operatorname{col}(X_i^-,U_i^-)$
has full row rank.
\end{assumption}
\begin{assumption}\label{ass:noise_bound}
For each \(i\in\mathcal N\), the aggregate uncertainty
\(\hat{D}_{i}:=W_{\eta i}V_i+D_i\) is bounded: there exists a known
matrix \(\Delta_i\) such that \(\hat{D}_{i}\in\mathcal{D}_i\), where
\begin{equation}\label{eq:noise_bound}
\mathcal{D}_{i}:=\left\{\bar{D}_{i}\in
\mathbb{R}^{(n_i+h_i)\times T}:
\bar{D}_{i}\bar{D}_{i}^\top
\preceq\Delta_{i}\Delta_{i}^\top\right\}.
\end{equation}
\end{assumption}
Let
\(\Gamma_i^\top:=[\bar A_{\eta,ii}\;\;\bar B_{\eta i}]\) denote a
candidate pair of local augmented matrices. The data-consistent model
set is
\begin{align}
\mathcal C_i
={}&
\Bigl\{
\Gamma_i^\top
\;\Big|\;
(X_i^+-\Gamma_i^\top Z_i)
(X_i^+-\Gamma_i^\top Z_i)^\top
\preceq
\Delta_i\Delta_i^\top
\Bigr\}
\nonumber\\
={}&
\Bigl\{
\Gamma_i^\top
\;\Big|\;
\Omega_i
+\Psi_i^\top\Gamma_i
+\Gamma_i^\top\Psi_i
+\Gamma_i^\top\Phi_i\Gamma_i
\preceq0
\Bigr\},
\label{eq:QMI_set}
\end{align}
where
$\Phi_i:=Z_iZ_i^\top$,
$\Psi_i:=-Z_iX_i^{+\top}$,
$\Omega_i:=X_i^+X_i^{+\top}-\Delta_i\Delta_i^\top$.
Under Assumption \ref{ass:noise_bound}, the true pair
\([A_{\eta,ii}\;\;B_{\eta i}]\) belongs to \(\mathcal C_i\). In
Section~\ref{Data-Driven Impulsive Control}, a robust LMI will use this
set to synthesize \(K_i\) without selecting or identifying a
particular model.

\subsection{Spiking Control Realization}\label{Neuromorphic Impulsive}
The preceding data-driven synthesis determines the feedback gain $K_i$, but the resulting command $K_i\eta_i$ remains continuously valued and requires online modulation of the control amplitude. This leaves open how the data-designed feedback can be implemented through a spiking control architecture whose admissible control actions are restricted to a finite set of fixed impulse vectors. To bridge this gap, the feedback command is encoded into the firing times and channel selections of LIF units. The resulting spiking input acts only at firing instants, without continuously varying actuation between events. The key analytical question is whether this spiking realization preserves closed-loop stability and the prescribed $\epsilon$-NE accuracy of the nominal continuous-feedback system.

Following \cite{gerstner2002spiking,eilers2025stability}, LIF dynamics
are employed as online spike generators. For each agent
$i\in\mathcal N$, consider $L_i$ non-interacting scalar LIF units
indexed by $\mathcal L_i:=\{1,\ldots,L_i\}$. Each unit integrates a
feedback-dependent drive and emits a spike upon reaching its
threshold. The units have no direct state or reset coupling, although
their drives depend on the common signal $\eta_i$. In the
distributional sense, their dynamics are
\begin{equation}
 \dot z_{i,p}(t)
 =
-\lambda_{i,p} z_{i,p}(t)
 +g_{i,p}(\eta_i(t))-\theta_{i,p} s_{i,p}(t).
\label{eq:local_lif_dynamics}
\end{equation}
Here, \(z_{i,p}\) is the state of unit \(p\),
\(\lambda_{i,p}\ge0\) is its leakage rate, and
\(g_{i,p}(\eta_i)\ge0\) is the feedback-dependent drive integrated by
the unit. The threshold \(\theta_{i,p}>0\) determines when a spike is
emitted, and \(s_{i,p}\) denotes the resulting spike train.

Specifically, the \(k\)th event time \(t_{i,p}^k\) satisfies
\begin{equation}
z_{i,p}\bigl((t_{i,p}^k)^-\bigr)=\theta_{i,p},
\qquad i\in\mathcal N,\quad p\in\mathcal L_i.
\label{eq:event_triggering}
\end{equation}
At \(t_{i,p}^k\), unit \(p\) emits a spike. The corresponding
distributional term in \eqref{eq:local_lif_dynamics} resets its state
\begin{equation}
z_{i,p}\bigl((t_{i,p}^k)^+\bigr)
=
z_{i,p}\bigl((t_{i,p}^k)^-\bigr)-\theta_{i,p}
=0.
\label{eq:neuronal_reset}
\end{equation}
The neuronal states are initialized within their admissible ranges as
\begin{equation}\label{eq:admissible_neuronal_initialization}
0\le \Theta_i^{-1}z_i(0)<\mathbf{1},
\qquad i\in\mathcal{N},
\end{equation}
where the inequalities are understood componentwise. This range is invariant under the neuronal dynamics: since
$g_{i,p}(\eta_i)\ge 0$, a state at $z_{i,p}=0$ cannot decrease below zero, while a state reaching
$z_{i,p}=\theta_{i,p}$ fires and is immediately reset to zero. Consequently,
\begin{equation}
0\le z_{i,p}(t)\le \theta_{i,p},
\qquad i\in\mathcal{N},\; p\in\mathcal{L}_i.
\label{eq:neuronal_state_bound}
\end{equation}
The spikes emitted by unit $p$ are represented by the spike train
\begin{equation}
s_{i,p}(t)
=
\sum_{k\in\mathcal K_{i,p}}
\delta\!\left(t-t_{i,p}^k\right),
\qquad i\in\mathcal N,\quad p\in\mathcal L_i,
\label{eq:local_spike_train}
\end{equation}
where \(\delta(\cdot)\) is the Dirac delta and
\(\mathcal K_{i,p}\) indexes the events of unit \(p\). Let
\(s_i:=\operatorname{col}(s_{i,1},\ldots,s_{i,L_i})\) collect all spike
trains assigned to agent \(i\).

For compactness, define
\[
\begin{aligned}
z_i&:={\rm col}(z_{i,1},\ldots,z_{i,L_i}),
&g_i&:={\rm col}(g_{i,1},\ldots,g_{i,L_i}),\\
\Theta_i&:={\rm diag}(\theta_{i,1},\ldots,\theta_{i,L_i}),
&\Lambda_i&:={\rm diag}(\lambda_{i,1},\ldots,\lambda_{i,L_i}).
\end{aligned}
\]
Here, $\Theta_i$ and $\Lambda_i$ are the reset and leakage matrices, respectively. Stacking \eqref{eq:local_lif_dynamics} over $p\in\mathcal{L}_i$ gives
\begin{equation}
 \dot z_i(t)
 =
 -\Lambda_i z_i(t)
 +g_i(\eta_i(t))
 -\Theta_i s_i(t).
\label{eq:local_lif_dynamics_ar}
\end{equation}
Each unit is associated with a fixed impulse direction
\(\mu_{i,p}\in\mathbb R^{m_i}\). Define
\(M_i:=[\,\mu_{i,1}\ \cdots\ \mu_{i,L_i}\,]\). The system input is
\begin{equation}
u_i(t)=M_i s_i(t).
\label{eq:local_impulsive_input}
\end{equation}
An isolated spike of unit \(p\) therefore produces the jump
\[
x_i((t_{i,p}^k)^+)
=x_i((t_{i,p}^k)^-)+B_i\mu_{i,p},
\]
whereas no continuously valued control is applied between events.

Substituting \eqref{eq:local_impulsive_input} into the augmented agent
dynamics shows that the spike train \(s_i\) produces the impulsive term
\(B_{\eta i}M_i s_i\). The same spike train generates the reset term
\(-\Theta_i s_i\) in \eqref{eq:local_lif_dynamics_ar}. This matched
jump structure motivates the following auxiliary coordinate
\begin{equation}\label{eq:chi}
\eta_{ci}
:=
\eta_i
+
B_{\eta i}M_i\Theta_i^{-1}z_i.
\end{equation}
The coordinate \(\eta_{ci}\) is introduced for analysis and is
not evaluated in the controller implementation.

Substituting \eqref{eq:augmented_open_loop1} and
\eqref{eq:local_lif_dynamics_ar} into the derivative of
\eqref{eq:chi} gives
\begin{align}
\label{eq:al}
 \dot\eta_{ci}
 &=\dot\eta_i+B_{\eta i}M_i\Theta_i^{-1}\dot z_i\nonumber\\
 &=\sum_{j=1}^{N}A_{\eta,ij}\eta_j+B_{\eta i}M_i s_i
   +W_{\eta i}\nu_i\nonumber\\
 &\quad+B_{\eta i}M_i\Theta_i^{-1}
   \bigl(-\Lambda_i z_i+g_i(\eta_i)-\Theta_i s_i\bigr)\\
 &=\sum_{j=1}^{N}A_{\eta,ij}\eta_j+W_{\eta i}\nu_i+B_{\eta i}M_i\Theta_i^{-1}
   \bigl(g_i(\eta_i)-\Lambda_i z_i\bigr).\nonumber
\end{align}
The terms containing \(s_i\) cancel exactly. Thus, \(\eta_{ci}\) is
continuous even though \(\eta_i\) and \(z_i\) jump. To recover the
analogue feedback contribution in the continuous-time dynamics of this
coordinate, we impose the \emph{rate-equivalence condition}
\begin{equation}\label{eq:realization_nonint}
M_i\Theta_i^{-1}g_i(\eta_i)=K_i\eta_i,
\qquad i\in\mathcal N.
\end{equation}
Substitution into \eqref{eq:al} gives
\[
 \dot \eta_{ci}
 =\sum_{j=1}^{N}A_{\eta,ij}\eta_j+B_{\eta i}K_i\eta_i
   +W_{\eta i}\nu_i-B_{\eta i}M_i\Theta_i^{-1}\Lambda_i z_i.
\]
Hence, apart from a leakage-dependent term, the continuous-time dynamics contain
exactly the feedback contribution of the analogue reference
controller.
\begin{remark}\label{rem:nonnegative_encoding}
Since $g_i(\eta_i)\in\mathbb R_+^{L_i}$,
condition \eqref{eq:realization_nonint} can hold for all $\eta_i$ only if
$\operatorname{im}(K_i)\subseteq\operatorname{cone}(M_i)$.
For any data-designed gain \(K_i\), this condition can be satisfied by
assigning two neuronal units to each control-input channel. Set
\(L_i=2m_i\), choose a diagonal matrix
$\Xi_i=\operatorname{diag}(\xi_{i,1},\ldots,\xi_{i,m_i})\succ0$,
and define
$M_i=[\,\Xi_i\ -\Xi_i\,]\in\mathbb R^{m_i\times 2m_i}$.
For \(v_i:=K_i\eta_i\in\mathbb R^{m_i}\), let
$g_i(\eta_i)=
\Theta_i
\begin{bmatrix}
\begin{smallmatrix}
\Xi_i^{-1}[v_i]_+\\
\Xi_i^{-1}[-v_i]_+
\end{smallmatrix}
\end{bmatrix}
\in\mathbb R_{+}^{2m_i}$,
where \([v]_+\) denotes the componentwise positive part of \(v\).
Then
\begin{align*}
M_i\Theta_i^{-1}g_i(\eta_i)
&=
[\,v_i\,]_+
-
[\,-v_i\,]_+\\
&=
v_i
=
K_i\eta_i,
\end{align*}
and hence \eqref{eq:realization_nonint} is satisfied. Moreover,
\(\operatorname{cone}(M_i)=\mathbb R^{m_i}\), and \(g_i\) is
nonnegative and globally Lipschitz.
Condition \eqref{eq:realization_nonint} matches the nominal
feedback contribution in the auxiliary dynamics but does not assert
the pointwise identity \(M_i s_i(t)=K_i\eta_i(t)\), since \(M_i s_i\)
is distribution-valued whereas \(K_i\eta_i\) is continuously valued.
The remaining deviation from the nominal continuous-feedback dynamics
is the leakage-dependent term
\(-B_{\eta i}M_i\Theta_i^{-1}\Lambda_i z_i\), which is bounded because
\(z_i\) remains in its invariant threshold set.
\end{remark}
The preceding construction leads to the following data-driven synthesis problem.

\begin{problem}
Under Assumptions~\ref{ass:exosystem}-\ref{ass:noise_bound}, use only
the local offline data to construct a distributed spiking controller \(u_i=M_is_i\), without identifying the matrices in
\eqref{eq:agent_dynamics}, such that
\begin{enumerate}[(i)]
\item the resulting closed-loop system is forward complete and Zeno-free;
\item there exists a finite constant \( \epsilon_0\ge0\) such that, for
every \( \epsilon> \epsilon_0\), there exists
\(T_\epsilon\ge0\) for which \(y(t)\) is an
\(\epsilon\)-NE for all \(t\ge T_\epsilon\).
\end{enumerate}
\end{problem}

Figure~\ref{fig:NNDD} summarizes the overall implementation. Data are used
only in the offline gain design. During online operation, neighboring
outputs enter the internal models, the gains drive the neuronal
states, and only their spikes actuate the agents.
\begin{figure}[!htbp]
    \centering
    \includegraphics[scale=0.43]{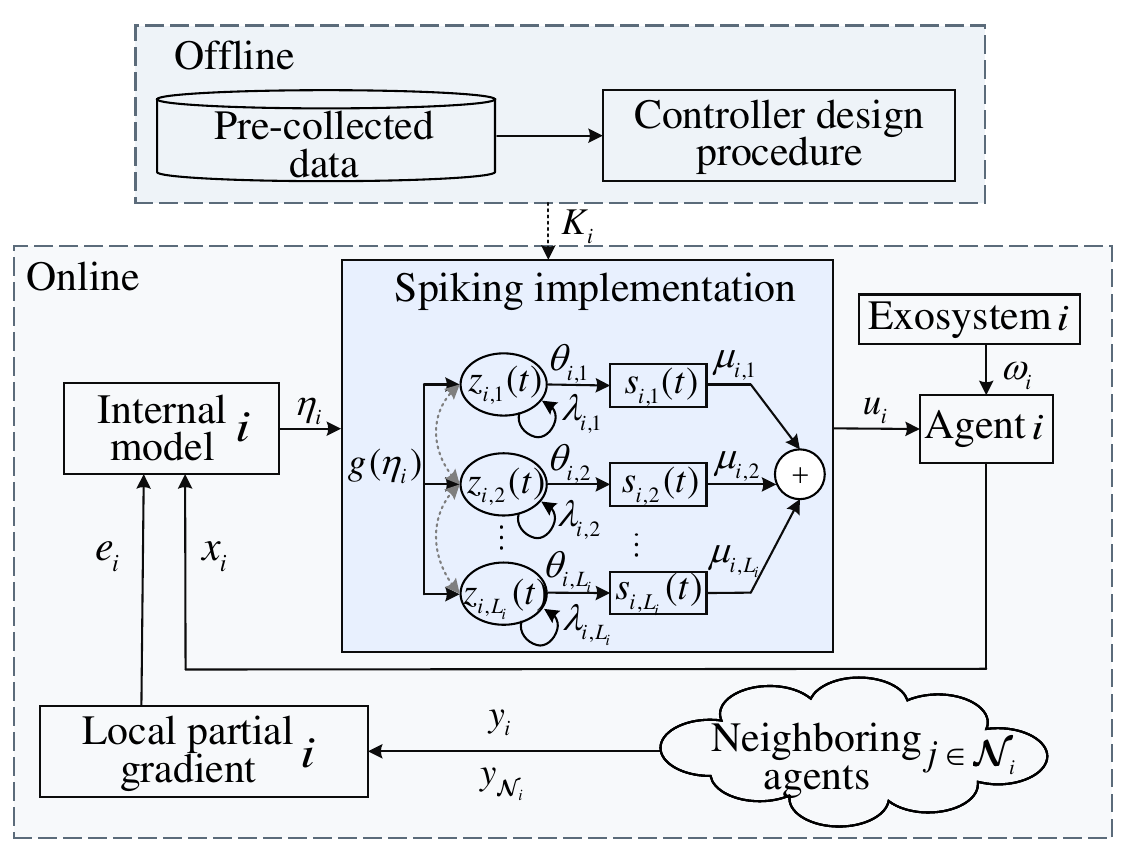}
    \caption{Data-driven spiking control architecture. Local
    offline data determine the analogue gains \(K_i\); online, these
    gains drive neuronal event generators whose fixed-weight spikes
    actuate the networked agents.}
    \label{fig:NNDD}
\end{figure}

\section{Non-interacting Spiking Realization: Synthesis and Guarantees}\label{Data-Driven Impulsive Control}
This section establishes the stability and equilibrium guarantees for
the non-interacting realization introduced in
Section~\ref{Neuromorphic Impulsive}. For compactness, define
\begin{align*}
    z&:=\operatorname{col}(z_1,\ldots,z_N),
    ~~~~~~~M:=\operatorname{blkdiag}(M_1,\ldots,M_N),\\
    \Lambda&:=\operatorname{blkdiag}(\Lambda_1,\ldots,\Lambda_N),
    ~\Theta:=\operatorname{blkdiag}(\Theta_1,\ldots,\Theta_N).
\end{align*}
Following the inter-event-time analysis in
\cite[Lemma~10 and Theorem~11]{eilers2025stability}, the next lemma
establishes the corresponding finite-horizon result for the
multi-agent setting considered here.
\begin{lemma}\label{lem:no_zeno_nonint}
Consider the non-interacting neuronal dynamics
\eqref{eq:local_lif_dynamics}, where each \(g_i\) is locally
Lipschitz. If, for some finite \(T>0\),
\begin{equation}
 \sup_{t\in[0,T)}\|\eta_i(t)\|<\infty,
 \qquad i\in\mathcal N,
 \label{eq:finite_horizon_bound}
\end{equation}
then every neuronal unit admits a positive lower bound on its inter-event times in $[0,T)$. Consequently, the aggregate event sequence has only finitely many events in $[0,T)$.
\end{lemma}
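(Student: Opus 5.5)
Fix an agent $i\in\mathcal N$ and a unit $p\in\mathcal L_i$. Since each $g_{i,p}$ is locally Lipschitz, it is continuous. By \eqref{eq:finite_horizon_bound}, $\eta_i$ takes values in a compact ball $\mathcal B_i:=\{\eta:\|\eta\|\le \bar\eta_i\}$ on $[0,T)$, where $\bar\eta_i:=\sup_{t\in[0,T)}\|\eta_i(t)\|<\infty$. Hence
\[
\bar g_{i,p}:=\max_{\|\eta\|\le\bar\eta_i} g_{i,p}(\eta)<\infty .
\]
The unit dynamics \eqref{eq:local_lif_dynamics} are scalar and decoupled from the other units, apart from the common drive. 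Between consecutive events the state therefore evolves by
\[
\dot z_{i,p}=-\lambda_{i,p}z_{i,p}+g_{i,p}(\eta_i(t)).
\]
This equation holds even though $\eta_i$ itself jumps at spikes of other units. Such jumps only change the value of the drive, which stays bounded by $\bar g_{i,p}$, and they do not jump $z_{i,p}$. This is precisely the non-interacting property.

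The key step is a comparison bound. Take two consecutive events $t_{i,p}^k<t_{i,p}^{k+1}$ in $[0,T)$. By \eqref{eq:neuronal_reset}, $z_{i,p}((t_{i,p}^k)^+)=0$, and by \eqref{eq:event_triggering}, $z_{i,p}((t_{i,p}^{k+1})^-)=\theta_{i,p}$. On the open interval, $z_{i,p}\ge0$ by \eqref{eq:neuronal_state_bound} and $\lambda_{i,p}\ge 0$, so $\dot z_{i,p}\le g_{i,p}(\eta_i(t))\le\bar g_{i,p}$. Integrating gives
\[
\theta_{i,p}\le \bar g_{i,p}\,\bigl(t_{i,p}^{k+1}-t_{i,p}^k\bigr),
\]
and therefore
\[
t_{i,p}^{k+1}-t_{i,p}^k\ge \tau_{i,p}:=\theta_{i,p}/\bar g_{i,p}>0.
\]
If $\bar g_{i,p}=0$, the unit never fires after its first event, and any positive bound works.

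The first event needs separate care. It occurs only when $z_{i,p}$ rises from $z_{i,p}(0)<\theta_{i,p}$, and that initial gap is strict by \eqref{eq:admissible_neuronal_initialization}. Thus $t_{i,p}^1\ge (\theta_{i,p}-z_{i,p}(0))/\bar g_{i,p}>0$, so no event happens at $t=0$. Combining the two bounds, unit $p$ fires at most $1+T/\tau_{i,p}$ times in $[0,T)$. Summing over the finitely many pairs $(i,p)$ bounds the total number of aggregate events.

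The main obstacle is making the preceding argument rigorous, because it presupposes that the impulsive solution exists on $[0,T)$ with well-defined event times. Several units, possibly belonging to different agents, may fire simultaneously, and the jumps of $\eta_i$ are caused by these very spikes. I would handle this within the solution concept already in use. Since the reset of $z_{i,p}$ depends only on $z_{i,p}$ itself (equation \eqref{eq:neuronal_reset}), simultaneous firings produce no cascades: each firing unit resets to $0<\theta_{i,p}$, and non-firing units do not jump. Hence every event time is isolated from the next event of the same unit by $\tau_{i,p}$.

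I would also check that the drive bound is used only on $[0,T)$, where \eqref{eq:finite_horizon_bound} is assumed. Local Lipschitz continuity enters only to guarantee existence and uniqueness of the flow between events and continuity of the drive $g_{i,p}(\eta_i(t))$ along it; the lower bound itself uses only boundedness of $g_{i,p}$ on compact sets.
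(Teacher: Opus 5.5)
Your proposal is correct and follows essentially the same route as the paper: bound the drive by continuity on the compact range of $\eta_i$, then show that climbing from the reset value $0$ to the threshold $\theta_{i,p}$ takes at least $\theta_{i,p}/\bar g_{i,p}$ time. The only difference is minor. The paper gets this from the variation-of-constants formula using $e^{-\lambda_{i,p}(\cdot)}\le 1$ and $g_{i,p}\ge 0$, while you use the differential inequality $\dot z_{i,p}\le \bar g_{i,p}$ via $z_{i,p}\ge 0$; your extra treatment of the first event and of simultaneous firings is a harmless refinement.
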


\noindent\textit{Proof:}
Fix \(i\in\mathcal N\) and \(p\in\mathcal{L}_i\). Since \(g_i\) is
locally Lipschitz and \(\eta_i\) is bounded on \([0,T)\), there exists
\(G_{i,p}(T)<\infty\) such that
\begin{equation}
 0\leq g_{i,p}(\eta_i(t))\leq G_{i,p}(T),
 \qquad t\in[0,T).
\end{equation}
Let $t_{i,p}^k<t_{i,p}^{k+1}$ be consecutive event times in $[0,T)$. The reset rule and the variation-of-constants formula give
\begin{equation}
 \theta_{i,p}
 =\int_{t_{i,p}^k}^{t_{i,p}^{k+1}}
 e^{-\lambda_{i,p}(t_{i,p}^{k+1}-\tau)}
 g_{i,p}(\eta_i(\tau))\,d\tau.
\end{equation}
Since $\lambda_{i,p}\geq0$,
\begin{equation}
 \theta_{i,p}
 \leq G_{i,p}(T)\bigl(t_{i,p}^{k+1}-t_{i,p}^k\bigr).
\end{equation}
Thus, if $G_{i,p}(T)>0$,
\begin{equation}
 t_{i,p}^{k+1}-t_{i,p}^k
 \geq\frac{\theta_{i,p}}{G_{i,p}(T)}>0.
\end{equation}
If $G_{i,p}(T)=0$, the unit cannot generate another event after a reset. Hence each unit generates only finitely many events in $[0,T)$. The conclusion follows because the numbers of agents and neuronal units are finite.

Lemma~\ref{lem:no_zeno_nonint} reduces Zeno-freeness to a
finite-horizon boundedness question. The next result answers that
question using an LMI expressed entirely in terms of the local data
matrices.

\begin{theorem}\label{thm:noninteracting}
Consider \eqref{eq:augmented_open_loop} interconnected with the
non-interacting neuronal dynamics
\eqref{eq:local_lif_dynamics_ar} under
Assumptions~\ref{ass:exosystem}-\ref{ass:noise_bound}. Suppose that,
for each \(i\in\mathcal N\), there exist
\(P_i=P_i^\top\succ0\) and \(H_i\) satisfying
\begin{equation}\label{eq:lmi_noninteracting}
\begin{bmatrix}
-\Omega_i & \Psi_i^\top-\Upsilon_i^\top\\
* & -\Phi_i
\end{bmatrix}
\prec0,\qquad
\Upsilon_i:=
\begin{bmatrix}
P_i\\
H_i
\end{bmatrix}.
\end{equation}
Define \(K_i:=H_iP_i^{-1}\) and
\(K:=\operatorname{blkdiag}(K_1,\ldots,K_N)\). Let each \(g_i\) be locally Lipschitz, nonnegative, and satisfy
\eqref{eq:realization_nonint}.
If \(0\le \Theta^{-1}z(0)<\mathbf 1\), then the closed-loop system under the non-interacting spiking controller is forward complete and Zeno-free. Moreover, there exists a finite constant \(\delta_0\ge0\) such that
\begin{equation}\label{eq:gradient_residual_bound}
\limsup_{t\to\infty}\|F(y(t))\|\le\delta_0.
\end{equation}
Let
$\epsilon_0
:=
\frac{\delta_0^2}{2\underline{\alpha}}$.
Then, for every \(\epsilon>\epsilon_0\), there exists
\(T_\epsilon\ge0\) such that \(y(t)\) is an
\(\epsilon\)-NE for all \(t\ge T_\epsilon\).
\end{theorem}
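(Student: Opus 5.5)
The proof proceeds in four steps: a robust-stability step from the data LMI, a reduction of the impulsive closed loop via the auxiliary coordinate \eqref{eq:chi}, a completeness/Zeno argument, and an ultimate-bound argument for $F(y)$.

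\emph{Step 1 (robust local stability from data).} We show that \eqref{eq:lmi_noninteracting} implies $(\bar A_{\eta,ii}+\bar B_{\eta i}K_i)P_i+P_i(\bar A_{\eta,ii}+\bar B_{\eta i}K_i)^\top\prec0$ for every $\Gamma_i^\top\in\mathcal C_i$, and in particular for the true pair $(A_{\eta,ii},B_{\eta i})$. Since $\Phi_i\succ0$ by Assumption~\ref{ass:rank_data}, a Schur complement rewrites the LMI as $-\Omega_i-(\Psi_i-\Upsilon_i)^\top\Phi_i^{-1}(\Psi_i-\Upsilon_i)\prec0$. The constraint in \eqref{eq:QMI_set}, after completing the square, gives $(\Gamma_i+\Phi_i^{-1}\Psi_i)^\top\Phi_i(\Gamma_i+\Phi_i^{-1}\Psi_i)\preceq\Psi_i^\top\Phi_i^{-1}\Psi_i-\Omega_i$. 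Combining the two inequalities with the identity $\Gamma_i^\top\Upsilon_i=(\bar A_{\eta,ii}+\bar B_{\eta i}K_i)P_i$ and a Young/S-lemma type inequality on the cross term $\Upsilon_i^\top(\Gamma_i+\Phi_i^{-1}\Psi_i)$ should yield the Lyapunov inequality; this is the standard noisy-data stabilization argument (matrix S-lemma). Hence each diagonal block $A_{\eta,ii}+B_{\eta i}K_i$ is Hurwitz. Because $A_\eta$ is block lower triangular under the topological ordering (Assumption~\ref{ass:graph}) and $B_\eta K$ is block diagonal, $A_{\rm cl}:=A_\eta+B_\eta K$ is block lower triangular with Hurwitz diagonal blocks, and is therefore Hurwitz. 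Consequently there is $P\succ0$ with $A_{\rm cl}^\top P+PA_{\rm cl}\prec0$.

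\emph{Step 2 (closed loop in the auxiliary coordinate).} Stack $\eta_c:=\eta+B_\eta M\Theta^{-1}z$. Using \eqref{eq:al}, \eqref{eq:realization_nonint}, and $\eta=\eta_c-B_\eta M\Theta^{-1}z$, we obtain the continuous dynamics
\[
\dot\eta_c=A_{\rm cl}\eta_c+W_\eta\nu+w,\qquad w:=-A_{\rm cl}B_\eta M\Theta^{-1}z-B_\eta M\Theta^{-1}\Lambda z .
\]
By \eqref{eq:neuronal_state_bound} (invariance of $0\le\Theta^{-1}z<\mathbf1$ from the initialization), $w$ is uniformly bounded by a constant $\bar w$ that depends only on $M,\Theta,\Lambda,B_\eta,K$.

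\emph{Step 3 (forward completeness and Zeno-freeness).} On any interval $[0,T)$ on which a solution exists, $\nu$ is bounded because it solves a linear ODE, and $\eta_c$ solves a linear ODE with bounded input, so it is bounded. Then $\eta=\eta_c-B_\eta M\Theta^{-1}z$ is bounded as well. Lemma~\ref{lem:no_zeno_nonint} then gives finitely many events on $[0,T)$, so the solution extends past $T$ and no finite escape or Zeno accumulation occurs. A standard maximal-solution argument then gives existence on $[0,\infty)$.

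\emph{Step 4 (regulation and the $\epsilon$-NE bound).} This is the main obstacle: $\nu$ need not decay (Assumption~\ref{ass:exosystem}), so the exosignal must be removed via the internal model. Assumption~\ref{ass:regulation_solvability} and the $p_i$-copy internal model guarantee solvability of the regulator equations. We take $X_\eta$ solving $X_\eta\bar E=A_{\rm cl}X_\eta+W_\eta$ (which exists since $\operatorname{spec}(\bar E)\cap\operatorname{spec}(A_{\rm cl})=\emptyset$); the internal model property then gives $C_\eta X_\eta+Q_\eta=0$. This is the step needing care, and we would cite the internal model principle from \cite{huang2004nonlinear} applied to the block-triangular cascade. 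Setting $\tilde\eta:=\eta_c-X_\eta\nu$, we get $\dot{\tilde\eta}=A_{\rm cl}\tilde\eta+w$. The function $V=\tilde\eta^\top P\tilde\eta$ then yields $\limsup\|\tilde\eta\|\le 2\kappa(P)^{1/2}\|P\|\bar w/\lambda_{\min}(-A_{\rm cl}^\top P-PA_{\rm cl})=:r$. Then
\[
e=C_\eta\eta+Q_\eta\nu=C_\eta\tilde\eta-C_\eta B_\eta M\Theta^{-1}z ,
\]
so $\delta_0:=\|C_\eta\|r+\|C_\eta B_\eta M\|$ bounds $\limsup\|F(y)\|$. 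Finally, for $\epsilon>\epsilon_0$, pick $\delta>\delta_0$ with $\delta^2/(2\underline\alpha)\le\epsilon$ and $T_\epsilon$ such that $\|F(y(t))\|\le\delta$ for $t\ge T_\epsilon$; then \eqref{eq:aggregate_regret_bound} shows that $y(t)$ is an $\epsilon$-NE for all $t\ge T_\epsilon$.
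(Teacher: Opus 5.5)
Your proof follows the same route as the paper: the jump-cancelling coordinate $\eta_c=\eta+B_\eta M\Theta^{-1}z$, the Hurwitz block-triangular analogue loop driven by a bounded term, Lemma~\ref{lem:no_zeno_nonint} for Zeno-freeness, a regulator solution that removes $\nu$, a quadratic Lyapunov ultimate bound, and finally \eqref{eq:aggregate_regret_bound}. Your Sylvester-plus-$p_i$-copy argument for $C_\eta X_\eta+Q_\eta=0$ is a self-contained version of the regulator-equation result the paper cites.

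However, Step~1 does not go through as written, because of a sign error. The Schur complement of \eqref{eq:lmi_noninteracting} with respect to the $-\Phi_i$ block is $-\Omega_i+(\Psi_i-\Upsilon_i)^\top\Phi_i^{-1}(\Psi_i-\Upsilon_i)\prec0$, not $-\Omega_i-(\Psi_i-\Upsilon_i)^\top\Phi_i^{-1}(\Psi_i-\Upsilon_i)\prec0$. Feasibility of \eqref{eq:lmi_noninteracting} forces $\Omega_i\succ0$, so your version holds for every $\Upsilon_i$ and cannot certify stability.

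With the correct sign your plan closes. Let $G:=\Gamma_i+\Phi_i^{-1}\Psi_i$. The matrix Young inequality $G^\top\Upsilon_i+\Upsilon_i^\top G\preceq G^\top\Phi_iG+\Upsilon_i^\top\Phi_i^{-1}\Upsilon_i$, together with your completed square, gives $\Gamma_i^\top\Upsilon_i+\Upsilon_i^\top\Gamma_i\preceq(\Psi_i-\Upsilon_i)^\top\Phi_i^{-1}(\Psi_i-\Upsilon_i)-\Omega_i\prec0$.

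The paper's route is shorter and needs neither $\Phi_i^{-1}$ nor Young's inequality. A congruence of \eqref{eq:lmi_noninteracting} with $\operatorname{col}(I,-\Gamma_i)$ gives $-(\Omega_i+\Psi_i^\top\Gamma_i+\Gamma_i^\top\Psi_i+\Gamma_i^\top\Phi_i\Gamma_i)+\Upsilon_i^\top\Gamma_i+\Gamma_i^\top\Upsilon_i\prec0$. Membership in \eqref{eq:QMI_set} then immediately yields $\Upsilon_i^\top\Gamma_i+\Gamma_i^\top\Upsilon_i\prec0$, which is \eqref{eq:local_closed_loop_inequality}.

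There is also a smaller slip in Step~4. The estimate $\|C_\eta B_\eta M\Theta^{-1}z\|\le\|C_\eta B_\eta M\|$ is not justified, since $\Theta^{-1}z\in[0,1]^{L_{\mathrm{tot}}}$ can have norm up to $\sqrt{L_{\mathrm{tot}}}$. Replace that term by $\sqrt{L_{\mathrm{tot}}}\,\|C_\eta B_\eta M\|$, or by $\|C_\eta\|\delta_M$ as the paper does. Likewise, $\bar w$ depends on $A_\eta$ through $A_{\mathrm{cl}}$, not only on the matrices you list. Neither point affects the existence of a finite $\delta_0$ or the $\epsilon$-NE conclusion.
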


\noindent \textit{Proof:}
The proof proceeds from robust local stabilization to global
regulation and then to the spiking realization. First, define the
local analogue closed-loop matrix
$\widetilde A_{\eta,ii}:=A_{\eta,ii}+B_{\eta i}K_i$.
For the true local matrix
$\Gamma_i^\top=[\,A_{\eta,ii}\;\;B_{\eta i}\,]\in\mathcal C_i$,
pre- and postmultiplying \eqref{eq:lmi_noninteracting} by
$\operatorname{col}(I,-\Gamma_i)^\top$ and
$\operatorname{col}(I,-\Gamma_i)$, respectively, and using
\eqref{eq:QMI_set} yield
\begin{align}\label{eq:local_closed_loop_inequality}
 (A_{\eta,ii}P_i+B_{\eta i}H_i)
 +(A_{\eta,ii}P_i+B_{\eta i}H_i)^\top
 \prec0.
\end{align}
Since $K_i=H_iP_i^{-1}$, this inequality is equivalent to
\begin{equation}
 \widetilde A_{\eta,ii}P_i
 +P_i\widetilde A_{\eta,ii}^\top
 \prec0.
\label{eq:local_closed_loop_stability}
\end{equation}
Hence, \(\widetilde A_{\eta,ii}\) is Hurwitz for every
\(i\in\mathcal N\). Next, define the network-level analogue
closed-loop matrix
$\widetilde A_\eta:=A_\eta+B_\eta K$.
Since $B_\eta$ and $K$ are block diagonal, the diagonal blocks of
$\widetilde A_\eta$ are precisely $\widetilde A_{\eta,ii}$.
Moreover, under a topological ordering of the acyclic communication
digraph, $\widetilde A_\eta$ is block lower triangular. Therefore,
$\widetilde A_\eta$ is Hurwitz.

We next establish output regulation for this stable analogue model.
Assumption~\ref{ass:regulation_solvability} guarantees the
local regulator solvability conditions
\cite[Theorem~1.9]{huang2004nonlinear}. Together with the controllable
$p_i$-copy internal models and the Hurwitz property of
$\widetilde A_\eta$, the regulator equation argument in
\cite[Theorem~1]{guo2021linear} guarantees the existence of a matrix
$\Sigma$ satisfying
\begin{subequations}\label{eq:regulator_equations_simplified}
\begin{align}
\Sigma\bar E=\widetilde A_\eta\Sigma+W_\eta,\label{eq:regulator_equations_simplified1}\\
\qquad
0=C_\eta\Sigma+Q_\eta \label{eq:regulator_equations_simplified2}.
\end{align}
\end{subequations}
We now transfer this property to the spiking implementation.
Let $r:=\Theta^{-1}z$ and
$\mathcal R:=[0,1]^{L_{\mathrm{tot}}}$, where
$L_{\mathrm{tot}}:=\sum_{i=1}^N L_i$. The initialization \eqref{eq:admissible_neuronal_initialization} and the reset rule imply $r(t)\in\mathcal R$.
Stacking the auxiliary variables introduced in \eqref{eq:chi} gives
$\eta_c:=\eta+B_\eta Mr$. The cancellation established in Section \ref{Neuromorphic Impulsive} makes $\eta_c$ continuous at the event times.  

Using \eqref{eq:realization_nonint} and
$\eta=\eta_c-B_\eta Mr$, its continuous dynamics can be written as
\begin{equation}\label{eq:eta}
\dot\eta_c
=
\widetilde A_\eta\eta_c+W_\eta\nu+E_0r ,
\end{equation}
where
\[
E_0:=-\widetilde A_\eta B_\eta M
-B_\eta M\Theta^{-1}\Lambda\Theta.
\]
Thus, in the auxiliary coordinate, the spiking implementation is represented by the stable analogue closed loop driven by the bounded signal \(E_0r\).

Define $\chi_c:=\eta_c-\Sigma\nu$. From \eqref{eq:regulator_equations_simplified},
\begin{subequations} \label{eq:regulator_al}
\begin{align}
 \dot\chi_c&=\widetilde A_\eta\chi_c+E_0r,\label{eq:regulator_al1}\\
 \eta&=\chi_c+\Sigma\nu-B_\eta Mr \label{eq:regulator_al2}.   
\end{align}
\end{subequations} 
Since \(r\) is bounded, \eqref{eq:regulator_al} implies that
\(\chi_c\), and hence \(\eta\), is bounded on every compact time
interval. Lemma~\ref{lem:no_zeno_nonint} therefore excludes finite
accumulation of event times. The same bounds preclude finite escape,
so every solution is forward complete and Zeno-free.

It remains to quantify the regulation residual. Since
$\widetilde A_\eta$ is Hurwitz, for any
$Q_c=Q_c^{\top}\succ0$, there exists
$P_c=P_c^{\top}\succ0$ satisfying
\begin{equation}\label{eq:network_lyapunov}
P_c\widetilde A_\eta
+\widetilde A_\eta^{\top}P_c
=-Q_c.
\end{equation}
Let
$V:=\chi_c^{\top}P_c\chi_c$,
$\rho_0:=\max_{r\in\mathcal R}\|P_cE_0r\|$.
The maximum is finite because $\mathcal R$ is compact. Moreover,
$\chi_c$ is continuous at every firing instant, and hence $V$ has no
jumps. Along the continuous flows,
\begin{equation}\label{eq:lyapunov_derivative_bound}
\dot V
\le
-\lambda_{\min}(Q_c)\|\chi_c\|^2
+2\rho_0\|\chi_c\|.
\end{equation}

Define
$\beta_0
:=
\frac{2\rho_0}{\lambda_{\min}(Q_c)}$.
Whenever
$V>\lambda_{\max}(P_c)\beta_0^2$, one has
$\|\chi_c\|>\beta_0$, and
\eqref{eq:lyapunov_derivative_bound} gives $\dot V<0$. Consequently,
\[
\limsup_{t\to\infty}V(t)
\le
\lambda_{\max}(P_c)\beta_0^2.
\]
Using
$V\ge\lambda_{\min}(P_c)\|\chi_c\|^2$ yields
\begin{equation}\label{eq:chi_ultimate_bound}
\limsup_{t\to\infty}\|\chi_c(t)\|
\le
\frac{2\sqrt{\kappa(P_c)}\,\rho_0}
     {\lambda_{\min}(Q_c)}.
\end{equation}

Using the regulator equation and the relation
$\eta=\chi_c+\Sigma\nu-B_\eta Mr$ gives
\[
e
=
C_\eta\eta+Q_\eta\nu
=
C_\eta(\chi_c-B_\eta Mr).
\]
Define
$\delta_M
:=
\max_{r\in\mathcal R}\|B_\eta Mr\|<\infty$.
Combining this relation with
\eqref{eq:chi_ultimate_bound} gives
\begin{equation}\label{eq:explicit_delta_0}
\begin{aligned}
\limsup_{t\to\infty}\|e(t)\|
&\le
\|C_\eta\|
\left(
\frac{2\sqrt{\kappa(P_c)}\,\rho_0}
     {\lambda_{\min}(Q_c)}
+\delta_M
\right)\\
&=:\delta_0.
\end{aligned}
\end{equation}
Since $e=F(y)$, the claimed pseudo-gradient bound follows.

Finally, let \(\epsilon>\epsilon_0\) and define
$r_\epsilon:=\sqrt{2\underline{\alpha}\epsilon}$.
Since
$\epsilon>\epsilon_0=\frac{\delta_0^2}{2\underline{\alpha}}$,
we have \(r_\epsilon>\delta_0\). Hence, by
\eqref{eq:gradient_residual_bound}, there exists
\(T_\epsilon\ge0\) such that
\[
\|F(y(t))\|<r_\epsilon,
\qquad t\ge T_\epsilon.
\]
Applying \eqref{eq:aggregate_regret_bound} gives
\begin{align}
&\max_{i\in\mathcal N}
\left\{
J_i(y_i(t),y_{\mathcal N_i}(t))
-
\inf_{\bar y_i\in\mathbb R^{p_i}}
J_i(\bar y_i,y_{\mathcal N_i}(t))
\right\}\nonumber\\
&\qquad
\le
\frac{\|F(y(t))\|^2}{2\underline{\alpha}}
<\epsilon.
\end{align}
Therefore, \(y(t)\) is an \(\epsilon\)-NE for all
\(t\ge T_\epsilon\).

\begin{remark}
Theorem~\ref{thm:noninteracting} separates what is
\emph{data-computable} from what is \emph{existential}. The gain
\(K_i=H_iP_i^{-1}\) is obtained directly from the local data through
\eqref{eq:lmi_noninteracting}, without identifying the agent
dynamics. The bound \(\delta_0\) in
\eqref{eq:explicit_delta_0}, however, depends
on the actual closed-loop matrices through \(P_c\), \(E_0\), and
\(\delta_M\). The theorem therefore guarantees a finite
\(\epsilon_0\), but the particular analytical bound derived here
cannot yet be evaluated from the offline data alone. A less
conservative, fully data-computable equilibrium-accuracy certificate
is an important direction for future work.
\end{remark}

\section{Connected Neuronal Units: Reset Coupling and Guarantees}\label{Extension to Connected Neuronal Units}
The non-interacting spiking realization in Section \ref{Data-Driven Impulsive Control} employs unit-wise resets. The paired construction in Remark~\ref{rem:nonnegative_encoding} provides an explicit realization for any data-designed gain \(K_i\), but uses \(L_i=2m_i\) neuronal units. This section considers connected neuronal units as an alternative realization based on a more compact set of fixed impulse directions. The cone condition imposed below admits choices with as few as \(L_i=m_i+1\) directions, which are generally nonorthogonal.

The inner product \(\mu_{i,p}^{\top}K_i\eta_i\) measures the
alignment between the impulse direction and the nominal feedback
command.  For nonorthogonal directions, independent resets may cause
redundant firing; following \cite[Sec.~III-C]{eilers2025stability},
we therefore introduce reset coupling among the neuronal units
\begin{equation}\label{eq:connected_dynamics}
    \dot z_i=-\Lambda_i z_i+M_i^\top K_i\eta_i-M_i^\top M_i s_i,  \qquad i\in\mathcal{N},
\end{equation}
where \(s_i\) collects the Dirac spike trains. If unit \(p\) fires,
then
\[
 z_{i,q}^{+}=z_{i,q}^{-}-\mu_{i,q}^{\top}\mu_{i,p},
 \qquad q\in\mathcal{L}_i.
\]
Thus, a spike modifies all neuronal states according to the pairwise
inner products of the impulse directions. In particular, it inhibits
aligned directions and excites oppositely oriented directions.

The self-reset associated with unit \(p\) is
\(\|\mu_{i,p}\|^2\). Accordingly, define
\begin{equation}\label{eq:connected_threshold}
d_{M_i}
:=
\operatorname{col}
\bigl(
\|\mu_{i,1}\|^2,\ldots,\|\mu_{i,L_i}\|^2
\bigr).
\end{equation}
When $z_{i,p}$ reaches the corresponding threshold
$d_{M_i,p}=\|\mu_{i,p}\|^2$ from below during continuous
evolution, unit $p$ emits a spike and its own state is
reset to zero. A reset may also make another unit active and thereby initiate a cascade of spikes at the same physical time. Unlike the
non-interacting realization, the drive \(M_i^\top K_i\eta_i\) is not
necessarily nonnegative; hence, the individual neuronal states need
not remain nonnegative.

The following assumption specifies the processing of reset cascades and the associated no-accumulation condition.
\begin{assumption}\label{ass:connected}
For each \(i\in\mathcal N\), the matrix \(M_i\) has no zero
column and satisfies
\begin{equation}\label{eq:cone_connected}
 \operatorname{cone}(M_i)=\mathbb R^{m_i}.
\end{equation}
Moreover, \(\Lambda_i=\lambda_iI_{L_i}\) for some
\(\lambda_i\ge0\). At an event time, an active unit \(p\) satisfying
\(z_{i,p}\ge d_{M_i,p}\) may be selected to spike, producing
\[
 z_i^+=z_i^- -M_i^\top M_i e_p,
\]
where \(e_p\in\mathbb R^{L_i}\) denotes the \(p\)th standard basis vector.
The active set is updated after each reset. The resulting cascade
terminates after finitely many resets and leaves no active unit. Its
terminal state defines the right-continuous value of \(z_i\) and
satisfies
\begin{equation}\label{eq:upper_bound_connected}
 z_{i,p}(t)< d_{M_i,p},
 \qquad p\in\mathcal L_i.
\end{equation}
For every finite \(T>0\), boundedness of \(\eta\) on \([0,T)\)
implies that only finitely many distinct event times occur in
\([0,T)\).
\end{assumption}

The nonzero-column condition makes every threshold positive. The cone
condition and the upper bound in \eqref{eq:upper_bound_connected}
together also imply a lower bound on every neuronal state, as shown
next.

\begin{lemma}\label{lem:connected_bound}
Consider \eqref{eq:connected_dynamics} with \(z_i(0)=0\) and the
thresholds \eqref{eq:connected_threshold}. Under
Assumption~\ref{ass:connected}, there exists a compact set
\(\mathcal Z_{M_i}\subset\mathbb{R}^{L_i}\) such that
\(z_i(t)\in\mathcal Z_{M_i}\) throughout the domain of every solution.
\end{lemma}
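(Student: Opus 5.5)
The upper bound \eqref{eq:upper_bound_connected} in Assumption~\ref{ass:connected} holds at every time, both along flows and after each cascade. So the task is to show that the neuronal states $z_i$ stay bounded from below. The natural compact set is
\[
\mathcal Z_{M_i}:=\{z\in\mathbb R^{L_i}:\ z\le d_{M_i},\ z\in \operatorname{im}(M_i^\top)+\mathcal K_i\},
\]
for a suitable bounded correction $\mathcal K_i$. The construction uses the structure $z_i=M_i^\top w_i$ for an auxiliary vector $w_i\in\mathbb R^{m_i}$.

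\emph{Step 1 (structure).} Let $w_i$ solve
\[
\dot w_i=-\lambda_i w_i+K_i\eta_i-M_is_i,\qquad w_i(0)=0,
\]
where each spike of unit $p$ subtracts $\mu_{i,p}$ from $w_i$. Because $\Lambda_i=\lambda_iI$ is scalar, left multiplication by $M_i^\top$ commutes with the leakage term. Both the flow in \eqref{eq:connected_dynamics} and every cascade reset $z_i^+=z_i^--M_i^\top M_ie_p$ are therefore reproduced exactly by $M_i^\top w_i$. Uniqueness then gives $z_i(t)=M_i^\top w_i(t)$ on the whole solution domain, since $z_i(0)=0$. This is where the assumptions $\Lambda_i=\lambda_i I_{L_i}$ and $z_i(0)=0$ are used.

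\emph{Step 2 (bounded polytope).} By \eqref{eq:upper_bound_connected}, $w_i(t)$ lies in
\[
\mathcal W_i:=\{w:\ \mu_{i,p}^\top w\le \|\mu_{i,p}\|^2,\ p\in\mathcal L_i\}.
\]
I will show that $\mathcal W_i$ is bounded using the cone condition \eqref{eq:cone_connected}. The recession cone of $\mathcal W_i$ is $\{v:M_i^\top v\le 0\}$. If $v$ is in this cone, write $v=M_i\rho$ with $\rho\ge0$, which is possible because $\operatorname{cone}(M_i)=\mathbb R^{m_i}$. Then
\[
\|v\|^2=\rho^\top M_i^\top v\le 0,
\]
so $v=0$. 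A polyhedron with trivial recession cone is bounded, and $\mathcal W_i$ is also closed, so it is compact.

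\emph{Step 3 (conclusion).} Set $\mathcal Z_{M_i}:=M_i^\top\mathcal W_i$. It is compact as a linear image of a compact set, and it contains $z_i(t)$ throughout the domain of every solution.

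\emph{Anticipated obstacle.} The main difficulty is Step 1 during cascades at a single physical time. There the right-continuous value is defined by a sequence of resets, and I need the identity $z_i=M_i^\top w_i$ to survive each reset individually. Since every reset is linear of the form $M_i^\top(\cdot)$, an induction over the finitely many resets in each cascade handles this, and the finite-cascade clause of Assumption~\ref{ass:connected} ensures the induction terminates. Between events the identity follows from uniqueness of solutions of the linear flow. Because the assumption guarantees only finitely many event times on every bounded interval, the identity propagates across the entire solution domain.
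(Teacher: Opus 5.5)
Your proof is correct, but it takes a different route from the paper's.

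\textbf{The paper's route.} The paper uses the cone condition only to build one strictly positive vector in $\ker M_i$, namely $\sum_{p}(e_p+\beta_{i,p})$ with $M_i\beta_{i,p}=-\mu_{i,p}$. Pairing this vector with $z_i$ gives a single scalar conservation law: the pairing stays at zero. Combining that law with the thresholds in \eqref{eq:upper_bound_connected} yields an explicit box, whose $p$th lower bound is minus a weighted sum of the other thresholds.

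\textbf{Your route.} You prove the full invariant $z_i(t)\in\operatorname{im}(M_i^\top)$ by lifting to an $m_i$-dimensional coordinate. You then use the cone condition in its dual (Farkas) form, that $M_i^\top v\le 0$ forces $v=0$, to show the lifted polytope is bounded.

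\textbf{What each buys.} The paper's argument is shorter and gives closed-form componentwise bounds, which are convenient for evaluating $\rho_c$ and $\delta_R$. Yours gives a tighter set, $\{z\in\operatorname{im}(M_i^\top): z\le d_{M_i}\}$. This set lies inside the paper's box, because every element of $\operatorname{im}(M_i^\top)$ is orthogonal to all of $\ker M_i$. Your lifted variable also has a useful interpretation. Since $R_{M_i}M_i^\top=I$, it equals $R_{M_i}z_i$, which is exactly the quantity entering $\eta_c=\eta+B_\eta R_M z$ in Theorem~\ref{thm:connected}. It can be read as the leaky integral of the mismatch $K_i\eta_i-M_is_i$ between the analogue command and the delivered spikes. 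The cost is an appeal to the recession-cone characterization of bounded polyhedra, and a compact set that is described only implicitly. Your opening definition with a ``bounded correction'' $\mathcal K_i$ is superfluous; Step~3 shows that $\mathcal K_i=\{0\}$ works.

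\textbf{One justification to change.} Do not propagate the identity $z_i=M_i^\top w_i$ by invoking the finite-event clause of Assumption~\ref{ass:connected}. That clause is conditional on boundedness of $\eta$, and Theorem~\ref{thm:connected} derives that boundedness \emph{from} this lemma, so relying on it would be circular. The identity needs no such input. It follows from the variation-of-constants formula with the spike measure (the paper's ``distributional sense''). Alternatively, it follows from your induction on any compact subinterval of a solution's domain, where events cannot accumulate. Only the unconditional finite-cascade clause is needed, and you use it correctly.
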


\noindent \textit{Proof:}
By \eqref{eq:cone_connected}, $-\mu_{i,p}\in\operatorname{cone}(M_i)$. Hence, for each $p\in\mathcal L_i$, there exists $\beta_{i,p}\in\mathbb R_+^{L_i}$ such that
\begin{equation}
M_i\beta_{i,p}=-\mu_{i,p}.
\end{equation}
Define \(w_i:=\sum_{p=1}^{L_i}(e_p+\beta_{i,p})\). Then
\(w_i>0\) componentwise and \(M_iw_i=0\). Premultiplying
\eqref{eq:connected_dynamics} by \(w_i^\top\) eliminates the drive and
reset terms containing \(M_i^\top\). Since
\(\Lambda_i=\lambda_iI_{L_i}\),
\begin{equation}
\frac{d}{dt}w_i^\top z_i
=-\lambda_i w_i^\top z_i,
\end{equation}
in the distributional sense. Hence $z_i(0)=0$ gives
\begin{equation}\label{eq:wz_zero}
w_i^\top z_i(t)=0.
\end{equation}
Combining \eqref{eq:wz_zero} with the
upper-bound condition \eqref{eq:upper_bound_connected}, for each
\(p\in\mathcal L_i\),
\begin{equation}
w_{i,p}z_{i,p}(t)
=-\sum_{q\ne p}w_{i,q}z_{i,q}(t)
\ge
-\sum_{q\ne p}w_{i,q}d_{M_i,q},
\end{equation}
and therefore
\begin{equation}
z_{i,p}(t)\ge
-\frac{1}{w_{i,p}}\sum_{q\ne p}w_{i,q}d_{M_i,q}.
\end{equation}
Together with \eqref{eq:upper_bound_connected}, this yields $z_i(t)\in\mathcal Z_{M_i}$, where
\begin{equation}\label{eq:ZMi}
\mathcal Z_{M_i}:=
\prod_{p=1}^{L_i}
\bigg[
-\frac{1}{w_{i,p}}\sum_{q\ne p}w_{i,q}d_{M_i,q},
\;d_{M_i,p}
\bigg],
\end{equation}
The set \(\mathcal Z_{M_i}\) is compact, proving the claim.

The following result establishes that the gain synthesized in
Section \ref{Data-Driven Impulsive Control} retains its stability and equilibrium guarantees under the spiking realization with connected neuronal units. In contrast to
Theorem~\ref{thm:noninteracting}, the main issues are the boundedness
of the reset-coupled neuronal states and the cancellation of all jumps
generated by a reset cascade.

\begin{theorem}\label{thm:connected}
Consider \eqref{eq:augmented_open_loop} with the impulsive input
\(u_i=M_is_i\) and the connected neuronal dynamics
\eqref{eq:connected_dynamics} under
Assumptions~\ref{ass:exosystem}-\ref{ass:noise_bound} and
Assumption~\ref{ass:connected}. Suppose that, for each \(i\in\mathcal N\), there exist
\(P_i=P_i^\top\succ0\) and \(H_i\) satisfying
\eqref{eq:lmi_noninteracting}.
Let \(K_i:=H_iP_i^{-1}\) and
\(K:=\operatorname{blkdiag}(K_1,\ldots,K_N)\). 
If \(z(0)=0\), every solution generated by the spiking rule
is forward complete and Zeno-free. Moreover, there exists a finite constant \(\delta_c\ge0\) such that
\begin{equation}\label{eq:connected_F_bound}
\limsup_{t\to\infty}\|F(y(t))\|\le\delta_c.
\end{equation}
Let
$\epsilon_{c,0}:=\frac{\delta_c^2}{2\underline{\alpha}}$.
Then, for every \(\epsilon>\epsilon_{c,0}\), there exists
\(T_\epsilon\ge0\) such that \(y(t)\) is an
\(\epsilon\)-NE for all \(t\ge T_\epsilon\).
\end{theorem}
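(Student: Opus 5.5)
The plan mirrors the proof of Theorem~\ref{thm:noninteracting}, with the reset-coupled neuronal state replacing the normalized state \(r\). First, the robust LMI step is reused verbatim. Pre- and postmultiplying \eqref{eq:lmi_noninteracting} by \(\operatorname{col}(I,-\Gamma_i)\) for the true pair shows that each \(\widetilde A_{\eta,ii}\) is Hurwitz. Block lower triangularity under a topological ordering then makes \(\widetilde A_\eta=A_\eta+B_\eta K\) Hurwitz. Assumption~\ref{ass:regulation_solvability} with the internal models yields \(\Sigma\) satisfying \eqref{eq:regulator_equations_simplified}. None of this depends on the neuronal architecture.

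Second, I would construct the jump-cancelling coordinate for the connected units. Since the reset in \eqref{eq:connected_dynamics} is \(-M_i^\top M_i s_i\) while the plant receives \(B_{\eta i}M_i s_i\), the natural candidate is
\[
\eta_{ci}:=\eta_i+B_{\eta i}(M_i^\top)^{\dagger}z_i .
\]
The pseudo-inverse is problematic when \(L_i>m_i\), because \(M_i^\top\) is then tall. I would instead look for a left factor \(N_i\) with \(N_iM_i^\top M_i=M_i\). Because \(\operatorname{cone}(M_i)=\mathbb R^{m_i}\), \(M_i\) has full row rank, so \(M_iM_i^\top\succ0\), and \(N_i:=(M_iM_i^\top)^{-1}M_i\) works. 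Setting \(\eta_{ci}:=\eta_i+B_{\eta i}N_iz_i\), each single reset \(z_i^+=z_i^--M_i^\top M_ie_p\) changes \(\eta_{ci}\) by
\[
B_{\eta i}M_ie_p-B_{\eta i}N_iM_i^\top M_ie_p=0 .
\]
Hence every step of a cascade leaves \(\eta_{ci}\) unchanged, and so does the terminal cascade state. This cascade bookkeeping is the step I expect to be the main obstacle: one must argue that the cascade is processed sequentially, with plant and neuron jumps applied together at each reset, and that Assumption~\ref{ass:connected} guarantees the cascade is finite, so continuity of \(\eta_c\) follows by induction over the reset steps.

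During flows, the drive term satisfies \(N_iM_i^\top K_i\eta_i=K_i\eta_i\), and \(\Lambda_i=\lambda_iI\). Substituting \(\eta=\eta_c-B_\eta Nz\) therefore gives
\[
\dot\eta_c=\widetilde A_\eta\eta_c+W_\eta\nu+E_cz,\qquad E_c:=-\widetilde A_\eta B_\eta N-\lambda B_\eta N,
\]
with \(N=\operatorname{blkdiag}(N_i)\) and \(\lambda\) the corresponding block scalings. By Lemma~\ref{lem:connected_bound}, \(z(t)\in\mathcal Z_M:=\prod_i\mathcal Z_{M_i}\), which is compact. So the forcing \(E_cz\) is bounded exactly as \(E_0r\) was.

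Third, set \(\chi_c:=\eta_c-\Sigma\nu\), so that \(\dot\chi_c=\widetilde A_\eta\chi_c+E_cz\). Boundedness of \(z\) gives boundedness of \(\chi_c\), and hence of \(\eta\), on every compact interval. The last clause of Assumption~\ref{ass:connected} then gives finitely many distinct event times on each \([0,T)\), and together with the absence of finite escape this yields forward completeness and Zeno-freeness. Next, apply the Lyapunov function \(V=\chi_c^\top P_c\chi_c\) with \(\rho_c:=\max_{z\in\mathcal Z_M}\|P_cE_cz\|\); \(V\) has no jumps, so this gives the ultimate bound
\[
\limsup_{t\to\infty}\|\chi_c\|\le 2\sqrt{\kappa(P_c)}\rho_c/\lambda_{\min}(Q_c).
\]
From \(e=C_\eta(\chi_c-B_\eta Nz)\) one then obtains
\[
\delta_c:=\|C_\eta\|\Bigl(2\sqrt{\kappa(P_c)}\rho_c/\lambda_{\min}(Q_c)+\max_{z\in\mathcal Z_M}\|B_\eta Nz\|\Bigr).
\]
Finally, the \(\epsilon\)-NE conclusion follows word for word from \eqref{eq:aggregate_regret_bound}, using \(r_\epsilon=\sqrt{2\underline{\alpha}\epsilon}>\delta_c\).
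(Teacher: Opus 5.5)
Your proposal is correct and follows the paper's proof essentially step for step. Your $N_i=(M_iM_i^\top)^{-1}M_i$ is exactly the paper's $R_{M_i}$, and the jump cancellation, the flow equation with $E_c$, the Lyapunov ultimate bound and the resulting $\delta_c$ all coincide with the paper's. Your worry about the pseudo-inverse is unfounded, though: for full-row-rank $M_i$, $(M_i^\top)^{\dagger}=(M_iM_i^\top)^{-1}M_i$, so your ``natural candidate'' was already the right coordinate.
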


\noindent \textit{Proof:}
The data-driven part of the argument is unchanged. As shown in
Theorem~\ref{thm:noninteracting}, feasibility of
\eqref{eq:lmi_noninteracting} makes
\(\widetilde A_\eta:=A_\eta+B_\eta K\) Hurwitz and guarantees a
solution \(\Sigma\) of
\eqref{eq:regulator_equations_simplified}.

The cone condition in Assumption~\ref{ass:connected} implies that
every \(M_i\) has full row rank. Define
\[
R_{M_i}:=(M_iM_i^\top)^{-1}M_i,~~
R_M:=\operatorname{blkdiag}(R_{M_1},\ldots,R_{M_N}).
\]
Then
\begin{equation}\label{eq:RM_identity}
R_MM^\top=I,
\qquad
R_MM^\top M=M,
\end{equation}
and consequently
\begin{equation}\label{eq:connected_feedback_recovery}
R_MM^\top K\eta=K\eta.
\end{equation}
Introduce the auxiliary coordinate
\begin{equation}\label{eq:connected_auxiliary_coordinate}
 \eta_c:=\eta+B_\eta R_Mz.
\end{equation}
At a firing instant, let $\Delta v:=v^+-v^-$ denote the
jump of any state $v$. If a neuronal unit spikes, with $e_p$
denoting the corresponding standard basis vector in the stacked
coordinates, then
$\Delta\eta=B_\eta Me_p$,
 $\Delta z=-M^\top Me_p$.
It follows from \eqref{eq:RM_identity} that
\[
 \Delta\eta_c
 =
 B_\eta Me_p-B_\eta R_MM^\top Me_p
 =0.
\]
Thus, \(\eta_c\) is unchanged by each reset and, consequently, by
every finite reset cascade. This establishes jump cancellation for the reset-coupled realization.

Along continuous flows, 
\begin{equation}\label{eq:connected_eta_c_dynamics}
\dot\eta_c = \widetilde A_\eta\eta_c+W_\eta\nu+E_cz,
\end{equation}
where
$ E_c:= -\bigl(\widetilde A_\eta B_\eta R_M+B_\eta R_M\Lambda\bigr)$.
With \(\chi_c:=\eta_c-\Sigma\nu\),
\eqref{eq:regulator_equations_simplified} and
\eqref{eq:connected_eta_c_dynamics} give
\begin{subequations}\label{eq:connected_chi_c}
\begin{align}
\dot\chi_c&=\widetilde A_\eta\chi_c+E_cz,\\
\eta&=\chi_c+\Sigma\nu-B_\eta R_Mz.
\end{align}
\end{subequations}
Lemma~\ref{lem:connected_bound} ensures that
$
z(t)\in\mathcal Z_M
:=
\mathcal Z_{M_1}\times\cdots\times\mathcal Z_{M_N}$
where \(\mathcal Z_M\) is compact. Since \(\nu\) is bounded on finite
intervals, \eqref{eq:connected_chi_c} shows that \(\chi_c\) and
\(\eta\) are also bounded on every finite interval.
Assumption~\ref{ass:connected} then excludes accumulation of event
times. Because only finitely many resets occur at any one event time
and the state cannot escape in finite time, every maximal solution can
be continued indefinitely. The closed loop is therefore forward
complete and Zeno-free.

The remaining regulation argument has the same form as that in
Theorem~\ref{thm:noninteracting}, with the bounded terms
\(E_0r\) and \(B_\eta Mr\) replaced by \(E_cz\) and
\(B_\eta R_Mz\), respectively. In particular, for any
\(Q_c=Q_c^\top\succ0\), let \(P_c=P_c^\top\succ0\) solve
\[
 P_c\widetilde A_\eta
 +\widetilde A_\eta^\top P_c
 =-Q_c.
\]
Define $\rho_c:=\max_{\bar z\in\mathcal Z_M}
 \|P_cE_c\bar z\|$, and 
$\delta_R:=
 \max_{\bar z\in\mathcal Z_M}$ $
 \|B_\eta R_M\bar z\|$.
Both constants are finite because \(\mathcal Z_M\) is compact.
Applying the Lyapunov argument in the proof of
Theorem~\ref{thm:noninteracting} gives
\[
 \limsup_{t\to\infty}\|\chi_c(t)\|
 \le
 \frac{2\sqrt{\kappa(P_c)}\,\rho_c}
      {\lambda_{\min}(Q_c)}.
\]
Moreover, the regulator equations yield
$e=C_\eta\bigl(\chi_c-B_\eta R_Mz\bigr)$.
Consequently,
\begin{align}\label{eq:connected_regulation_bound}
 \limsup_{t\to\infty}\|F(y(t))\|
 &=
 \limsup_{t\to\infty}\|e(t)\|\\
 &\le
 \|C_\eta\|
 \left(
 \frac{2\sqrt{\kappa(P_c)}\,\rho_c}
      {\lambda_{\min}(Q_c)}
 +\delta_R
 \right)
 =:\delta_c.\nonumber
\end{align}
Finally, let \(\epsilon>\epsilon_{c,0}\). Since
$\epsilon>\epsilon_{c,0}=\frac{\delta_c^2}{2\underline{\alpha}}$,
we have
$\sqrt{2\underline{\alpha}\epsilon}>\delta_c$.
It follows from \eqref{eq:connected_F_bound} that there exists
\(T_\epsilon\ge0\) such that
\[
\|F(y(t))\|
<
\sqrt{2\underline{\alpha}\epsilon},
\qquad t\ge T_\epsilon.
\]
Applying \eqref{eq:aggregate_regret_bound} gives
\[
\max_{i\in\mathcal N}
\left\{
J_i(y_i(t),y_{\mathcal N_i}(t))
-
\inf_{\bar y_i\in\mathbb R^{p_i}}
J_i(\bar y_i,y_{\mathcal N_i}(t))
\right\}
<\epsilon
\]
for all \(t\ge T_\epsilon\). Hence, \(y(t)\) is an
\(\epsilon\)-NE for all \(t\ge T_\epsilon\). This completes the proof.
\begin{remark}
Theorems~\ref{thm:noninteracting} and
\ref{thm:connected} use the same data-designed gain but concern different spiking realizations. In the non-interacting case,
boundedness follows directly from the componentwise threshold resets.
For connected neuronal units, reset coupling may generate negative
neuronal states and simultaneous reset cascades; boundedness instead
follows from Lemma~\ref{lem:connected_bound}, while jump cancellation
requires the left inverse \(R_M\). The common form of the final
regulation bound reflects the fact that both realizations recover the
same stable analogue closed loop subject to different bounded
realization residuals.
\end{remark}

\section{Numerical Study: Spacecraft Formation Reconfiguration}\label{sec:simulation}
We consider an in-plane formation reconfiguration problem involving
seven spacecraft with distinct masses in low Earth orbit. For small
relative separations about a circular reference orbit, their dynamics
in the local-vertical/local-horizontal frame are described by the
controlled Clohessy--Wiltshire--Hill equations
\cite{clohessy1960terminal}
\begin{align}
\ddot{\rho}_{x,i}-2n\dot{\rho}_{y,i}-3n^2\rho_{x,i}
&=\frac{u_{x,i}}{m_i}+\omega_{x,i},\qquad i=1,\ldots,7,\notag\\
\ddot{\rho}_{y,i}+2n\dot{\rho}_{x,i}
&=\frac{u_{y,i}}{m_i}+\omega_{y,i},
\label{eq:sim-agent-dynamics}
\end{align}
where \(\rho_{x,i}\) and \(\rho_{y,i}\) are the radial and along-track
positions, \(u_{x,i}\) and \(u_{y,i}\) are the control forces,
\(\omega_{x,i}\) and \(\omega_{y,i}\) are acceleration disturbances, \(m_i\) is
the spacecraft mass, and \(n\) is the orbital mean motion.

Let $\rho_i:={\rm col}(\rho_{x,i},\rho_{y,i})$, $\upsilon_i:={\rm col}(\dot\rho_{x,i},\dot\rho_{y,i})$, $u_i:={\rm col}(u_{x,i},u_{y,i})$, and $\omega_i:={\rm col}(\omega_{x,i},\omega_{y,i})$. With $x_i:={\rm col}(\rho_i,\upsilon_i)$ and $y_i:=\rho_i$, \eqref{eq:sim-agent-dynamics} takes the form in \eqref{eq:agent_dynamics}, with
\begin{align*}
A_{i}&=
\scalebox{0.85}{$\begin{bmatrix}
0&0&1&0\\
0&0&0&1\\
3n^2&0&0&2n\\
0&0&-2n&0
\end{bmatrix}$},
&
B_i&=
\scalebox{0.85}{$\begin{bmatrix}
0&0\\
0&0\\
1/m_i&0\\
0&1/m_i
\end{bmatrix}$},\\
W_i&=
\scalebox{0.85}{$\begin{bmatrix}
0&0&1&0\\
0&0&0&1
\end{bmatrix}^{\top}$},
&
C_i&=
\scalebox{0.85}{$\begin{bmatrix}
1&0&0&0\\
0&1&0&0
\end{bmatrix}$}.
\end{align*}
The common reference orbit is circular with an altitude of \(h=500\,\mathrm{km}\) and a mean motion of \(n=1.1068\times10^{-3}\,\mathrm{s}^{-1}\). The masses of the seven spacecraft are $(m_1,\ldots,m_7)=(10.5,11,11.5,12,12.5,13,13.5)\,\mathrm{kg}$.
The directed information graph is shown in Fig.~\ref{fig:infor}.
\begin{figure}[t]
    \centering
    \includegraphics[scale=0.5]{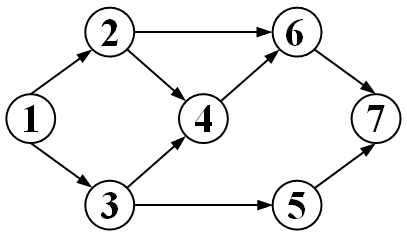}
    \caption{Directed information graph for the seven-spacecraft network.}
    \label{fig:infor}
\end{figure}

The acceleration disturbance \(\omega_i\) is generated by
$\dot \omega_i=\begin{bmatrix}
\begin{smallmatrix}
0&n\\-n&0\end{smallmatrix}\end{bmatrix} \omega_i$.
Because the extended exosystem also contains the constant offset in
the pseudo-gradient, its minimal polynomial is \(s(s^2+n^2)\). We use
the two-copy internal model
\[
\begin{aligned}
\Pi_i&=\operatorname{blkdiag}(\pi,\pi),&
\widehat{\Pi}_i&=\operatorname{blkdiag}(\widehat{\pi},\widehat{\pi}),\\
\pi&=
\begin{bmatrix}
0&1&0\\
0&0&1\\
0&-n^2&0
\end{bmatrix},&
\widehat{\pi}&=\operatorname{col}(0,0,1).
\end{aligned}
\]
The pair \((\pi,\widehat{\pi})\) is controllable and reproduces the
zero and harmonic modes required by the regulation model.
Each spacecraft minimizes
\begin{align}
    J_i(y_i,y_{\mathcal N_i})
    =
    \|y_i-r_i\|^2+
    \sum_{j\in\mathcal N_i}\|y_i-y_j\|^2,
    \label{eq:sim-cost}
\end{align}
where \(r_i\) is its prescribed relative position. The initial and
prescribed positions, in meters, are
\begin{align*}
 &[\,y_1(0)\ \cdots\ y_7(0)\,]
 =\scalebox{0.85}{$\begin{bmatrix}
 -5&4&-6&5&-3&3&8\\
 5&3&5&5&-5&-5&-5
 \end{bmatrix}$},\\
& [\,r_1\ \cdots\ r_7\,]
 =\scalebox{0.85}{$\begin{bmatrix}
 0&-10&15&-10&10&-9&9\\
 8&10&12&0&0&-9&-9
 \end{bmatrix}$}.
\end{align*}
The graph is acyclic, and each cost is strictly convex in \(y_i\).
Equation~\eqref{eq:unique_NE} gives the unique NE
\begin{align*}
 \scalebox{0.85}{$\setlength{\arraycolsep}{2.2pt}
 [\,y_1^\star\ \cdots\ y_7^\star\,]
 =\begin{bmatrix}
 0&-5&7.5&-2.5&8.75&-5.5&4.083\\
 8&9&10&6.333&5&2.111&-0.630
 \end{bmatrix}$}.
\end{align*}
For each spacecraft, the offline local data-acquisition procedure in
Section~\ref{Local Data} is conducted over \(8\,\mathrm{s}\), using \(T=80\)
samples with sampling interval \(h_d=0.1\,\mathrm{s}\). On each
sampling interval, a zero-order-held probing input is applied, with
each component independently drawn from
\([-50,50]\,\mathrm{N}\). The initial position and velocity components
are drawn from \([-10,10]\,\mathrm{m}\) and
\([-0.02,0.02]\,\mathrm{m/s}\), respectively, while the internal-model
state is initialized at zero. Each component of the initial disturbance state
\(\omega_i(0)\) is sampled once from
\([-3\times10^{-4},3\times10^{-4}]\), and the subsequent
disturbance trajectory is generated by
\(\dot\omega_i=E_i\omega_i\). Each component of the
derivative-approximation error \(d_i(t_\ell)\) is independently
sampled from \([-10^{-3},10^{-3}]\) at the data-acquisition
instants. The physical matrices are used
only by the simulator to generate the data and closed-loop
trajectories; controller synthesis uses only the locally collected
data and the prescribed uncertainty bound. For every \(i\in\mathcal N\),
the resulting matrix
\(Z_i\in\mathbb R^{12\times80}\) satisfies
\(\operatorname{rank}(Z_i)=12\). The matrix \(\Delta_i\) is selected
from known bounds on the aggregate disturbance and derivative error
such that Assumption~\ref{ass:noise_bound} holds. The local LMI
\eqref{eq:lmi_noninteracting} then yields
\(K_i=H_iP_i^{-1}\), which is used in the nominal analogue controller
and both spiking realizations.

The scalar \(\gamma>0\) specifies the common scale of the fixed impulse
directions, and \(\gamma=4\) is used throughout the simulations. For
the non-interacting realization, let
$L_i=4$, $M_i=\gamma[\,I_2\ -I_2\,]$, $\Theta_i=\gamma I_4$, and
$g_i(\eta_i)=\operatorname{col}\bigl([K_i\eta_i]_+,[-K_i\eta_i]_+\bigr).$
It follows directly that
\(M_i\Theta_i^{-1}g_i(\eta_i)=K_i\eta_i\), so that
\eqref{eq:realization_nonint} is satisfied. For the connected
realization, only three impulse directions are used
$L_i=3$,
$M_i=\gamma
\begin{bmatrix}
\begin{smallmatrix}
1&0&-1\\
0&1&-1
\end{smallmatrix}
\end{bmatrix}$.
This matrix has no zero column and satisfies
\(\operatorname{cone}(M_i)=\mathbb R^2\). Its threshold vector is
$d_{M_i}
=
\operatorname{col}
\bigl(
\|\mu_{i,1}\|^2,
\|\mu_{i,2}\|^2,
\|\mu_{i,3}\|^2
\bigr)
=
\gamma^2\operatorname{col}(1,1,2).$
The dynamics \eqref{eq:connected_dynamics} and the threshold rule
\eqref{eq:connected_threshold} are therefore applied, with
simultaneously active units processed according to
Assumption~\ref{ass:connected}. Both realizations use
\(z_i(0)=0\) and \(\Lambda_i=0.2I_{L_i}\).

Figures~\ref{fig:sim-analogue}--\ref{fig:sim-connected} show the
decision trajectories over \(t\in[0,20]\,\mathrm{s}\), where circles
and crosses denote the initial positions and the Nash equilibrium,
respectively. The nominal analogue controller drives the outputs to
\(y^\star\). The spiking realizations with non-interacting and connected neuronal units drive the outputs to neighborhoods of the same Nash configuration using only fixed-weight spikes. Although all three implementations use the same gains \(K_i\), the
neuronal trajectories need not coincide with the analogue trajectories
because rate equivalence does not imply the pointwise identity
\(M_is_i(t)=K_i\eta_i(t)\). The observed residuals are consistent with
the leakage-dependent practical-equilibrium bounds established in
Theorems~\ref{thm:noninteracting} and~\ref{thm:connected}.

\begin{figure}[!htbp]
    \centering
    \includegraphics[scale=0.5]{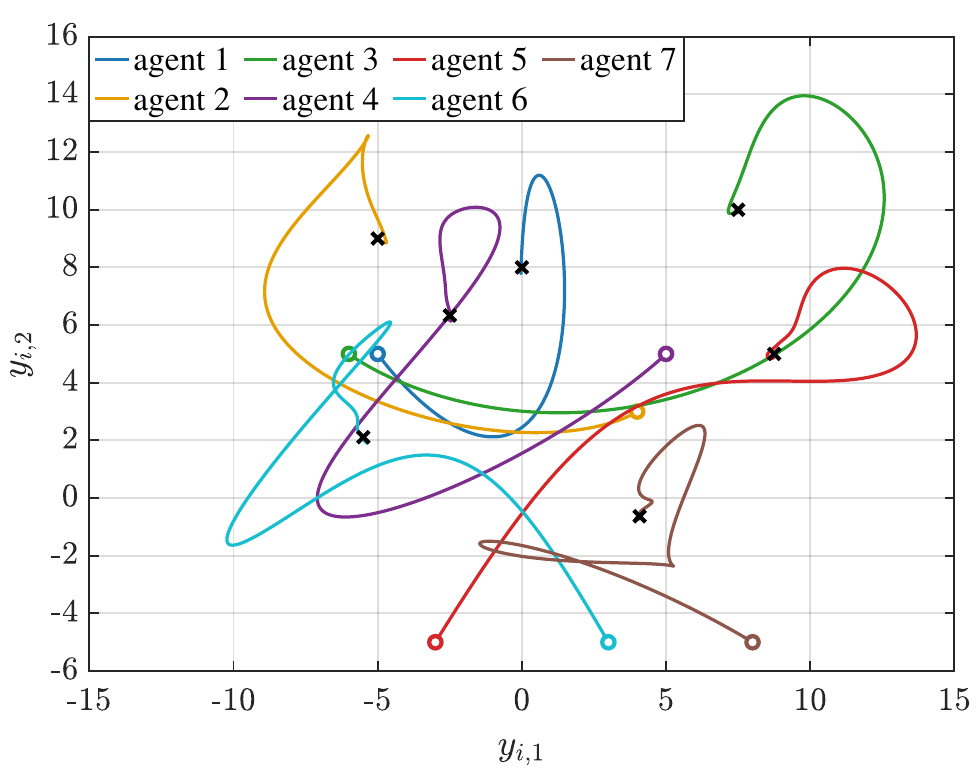}
    \caption{Decision-output trajectories under the continuously
    valued analogue reference controller.}
    \label{fig:sim-analogue}
\end{figure}
\begin{figure}[!htbp]
    \centering
    \includegraphics[scale=0.5]{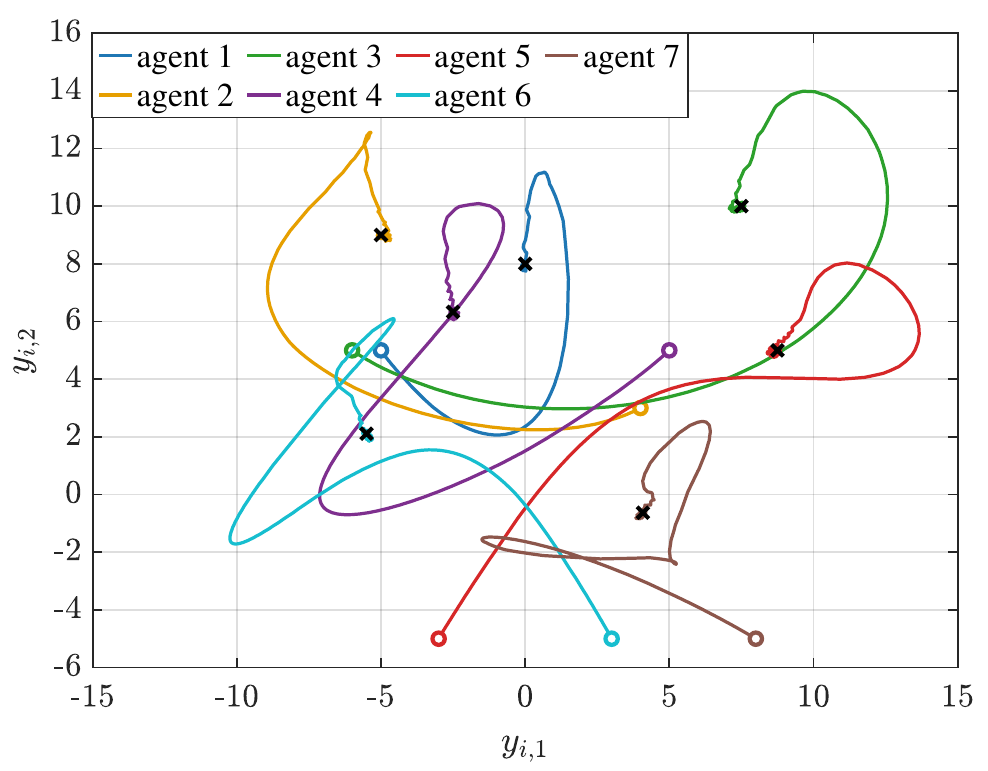}
    \caption{Decision-output trajectories under the non-interacting spiking controller.}
    \label{fig:sim-noninteracting}
\end{figure}

\begin{figure}[!htbp]
    \centering
    \includegraphics[scale=0.5]{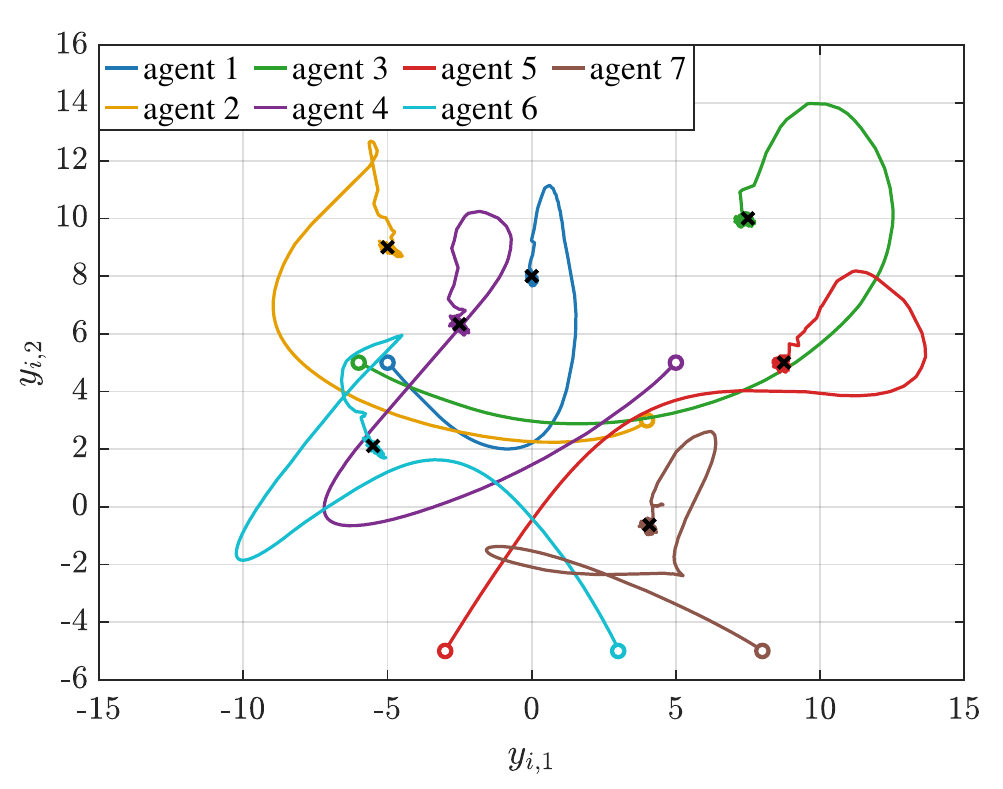}
    \caption{Decision-output trajectories under the connected
    spiking controller.}
    \label{fig:sim-connected}
\end{figure}

Figure~\ref{fig:sim-agent4-split} details the non-interacting implementation for agent 4.  The upper panels compare the impulsive velocity
components \(\upsilon_{4,\ell}\) with the corresponding auxiliary
components \(\upsilon_{4,c,\ell}\), \(\ell=1,2\). The lower panels show
the positive and negative neuronal channels and their reset impulses.
Every spike produces a jump in \(\upsilon_{4,\ell}\), accompanied by
an equal and opposite jump in the neuronal correction. The resulting
\(\upsilon_{4,c,\ell}\) remains continuous, directly illustrating the
cancellation mechanism in \eqref{eq:chi}-\eqref{eq:al}.

\begin{figure}[!htbp]
    \centering
    \includegraphics[scale=0.475]{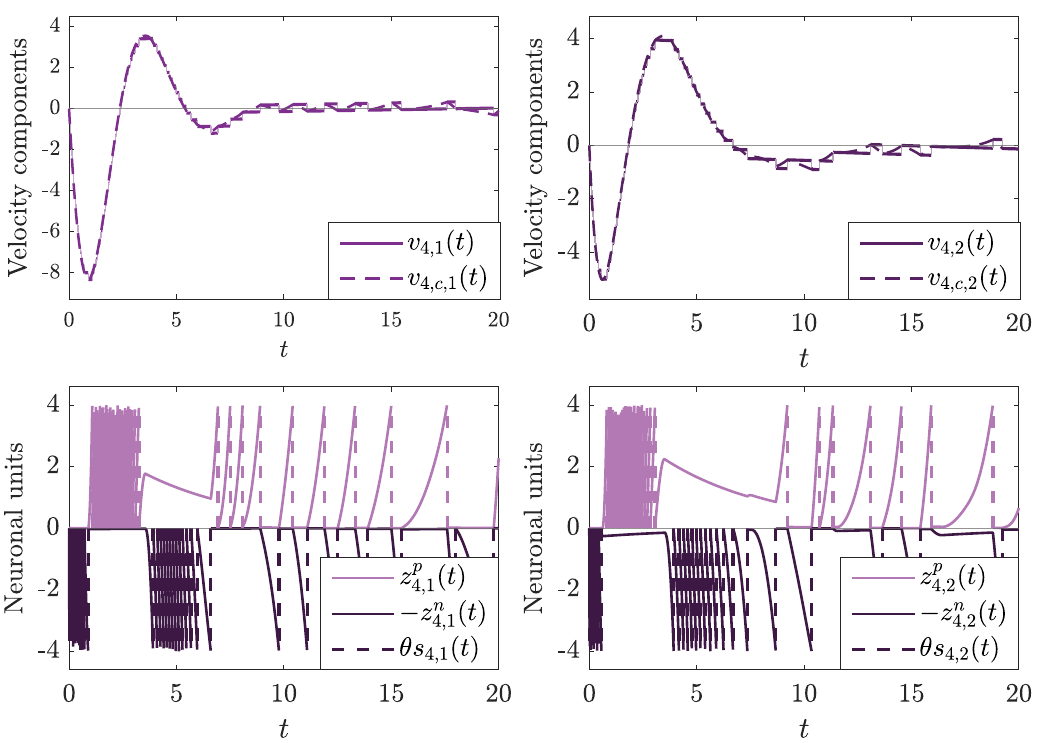}
    \caption{Spacecraft 4 under the non-interacting spiking implementation: spike-driven and auxiliary velocities (top), and neuronal states and spike-triggered resets (bottom).}
    \label{fig:sim-agent4-split}
\end{figure}

To compare the convergence rates, define the normalized output error
and pseudo-gradient norm as
\begin{equation*}
E_y(t)
\!:=\!
\frac{\|y(t)-y^\star\|}
     {\|y(0)-y^\star\|}\!\times\!100\%,~
E_F(t)
\!:=\!
\frac{\|F(y(t))\|}
     {\|F(y(0))\|}\!\times\!100\%.
\end{equation*}
Both quantities equal \(100\%\) at \(t=0\).
Figure~\ref{fig:error-comparison} shows their evolution using
logarithmic vertical scales. Under the analogue controller, both
errors converge to zero. Under the spiking implementations, the errors decrease and remain bounded, consistent with
Theorems~\ref{thm:noninteracting} and~\ref{thm:connected}. For the
selected parameters, the non-interacting implementation yields a
smaller residual than the connected implementation. This difference
depends on the impulse directions, thresholds, leakage rates, and
reset coupling used in the two implementations.

\begin{figure}[!htbp]
    \centering
    \includegraphics[scale=0.5]{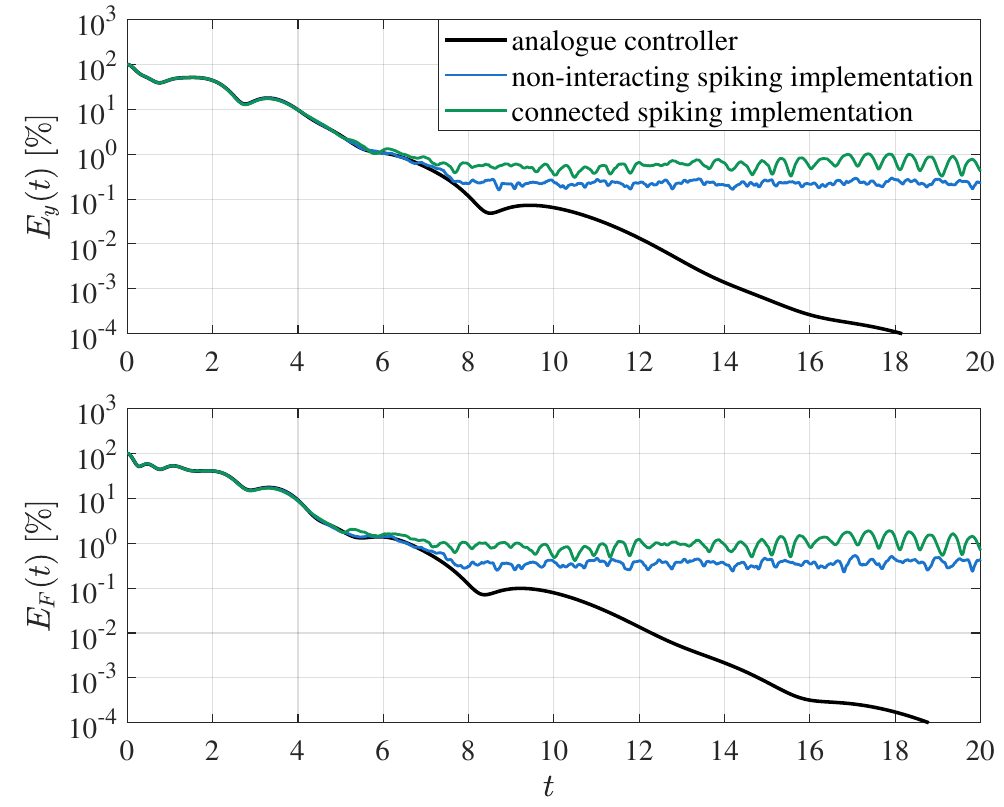}
    \caption{Output error and pseudo-gradient norm under the analogue controller and the two spiking implementations.}
    \label{fig:error-comparison}
\end{figure}

\section{Conclusions}\label{conclusion}
We developed a direct data-driven spiking control framework
for distributed $\epsilon$-NE seeking with unknown agent dynamics and
exogenous disturbances. By selecting the pseudo-gradient as the
regulated error, stabilizing analogue feedback gains are computed
from noisy local data through robust LMIs without model
identification. Online, the resulting feedback commands drive
non-interacting or connected neuronal units, while the agents receive
only fixed-weight spikes. Auxiliary coordinates that remain
continuous at event times represent the impulsive closed loops as the
corresponding stable analogue systems subject to bounded terms
depending on the neuronal variables. For non-interacting units, this
representation establishes forward completeness, excludes Zeno
behavior, and yields an ultimate bound on the pseudo-gradient. The
same properties are obtained for connected units under a spiking rule
with finite reset cascades and no finite-time event accumulation.
Consequently, both spiking implementations achieve an eventual
$\epsilon$-NE for every $\epsilon$ above a finite threshold. The
spacecraft example illustrates these theoretical results. Future work
will extend the results to general directed graphs, develop verifiable
spiking rules for connected neuronal units, and derive equilibrium
bounds directly from data.
\bibliographystyle{plainnat}
\bibliography{ref}
\end{document}